\documentclass{sig-alternate}

\usepackage{times}
\usepackage{graphicx}
\usepackage{amsmath, amsfonts}
\usepackage{color}
\usepackage{url}
\usepackage{cite}

\newtheorem{definition}{Definition}
\newtheorem{problem}{Problem}
\newtheorem{theorem}{Theorem}
\newtheorem{corollary}{Corollary}
\newtheorem{lemma}{Lemma}
\newtheorem{property}{Property}

\newcommand{\nop}[1]{}

\newcommand{\T}{\mathcal{T}}
\newcommand{\Qo}{\dot{Q}}
\newcommand{\Ro}{\dot{R}}
\newcommand{\Qt}{\ddot{Q}}
\newcommand{\Rt}{\ddot{R}}
\newcommand{\A}{\mathcal{A}}
\newcommand{\C}{\mathcal{C}}

\def\done{\hspace*{\fill} $\square$}

\def\extraspacing{\vspace{2mm}}

\def\figcapup{\vspace{-2mm}}
\def\figcapdown{\vspace{-4mm}}
\def\tblcapup{\vspace{-2mm}}
\def\tblcapdown{\vspace{-4mm}}

\begin{document}
\begin{sloppy}

\title{The Hardness and Approximation Algorithms for L-Diversity}

\numberofauthors{3}

\author{
\alignauthor
Xiaokui Xiao\\
       \affaddr{Nanyang Technological University}\\
       \affaddr{Singapore}\\
       \email{xkxiao@ntu.edu.sg}
\alignauthor
Ke Yi\\
       \affaddr{Hong Kong University of Science and Technology}\\
       \affaddr{Hong Kong}\\
       \email{yike@cse.ust.hk}
\alignauthor
Yufei Tao\\
       \affaddr{Chinese University of Hong Kong}\\
       \affaddr{Hong Kong}\\
       \email{taoyf@cse.cuhk.hk}
}

\toappear{Permission to make digital or hard copies of all or part of
  this work for personal or classroom use is granted without fee provided that
  copies are not made or distributed for profit or commercial advantage and
  that copies bear this notice and the full citation on the first page. To copy
  otherwise, to republish, to post on servers or to redistribute to lists,
  requires prior specific permission and/or a fee. \par
  {\confname EDBT 2010}, March 22--26, 2010, Lausanne, Switzerland.\par
  Copyright 2010 ACM 978-1-60558-945-9/10/0003\ ...\$10.00}

\maketitle

\begin{abstract}
The existing solutions to privacy preserving publication can be
classified into the {\em theoretical} and {\em heuristic}
categories. The former guarantees provably low information loss,
whereas the latter incurs gigantic loss in the worst case, but is
shown empirically to perform well on many real inputs. While
numerous heuristic algorithms have been developed to satisfy
advanced privacy principles such as $l$-diversity, $t$-closeness,
etc., the theoretical category is currently limited to $k$-anonymity
which is the earliest principle known to have severe vulnerability
to privacy attacks. Motivated by this, we present the first
theoretical study on $l$-diversity, a popular principle that is
widely adopted in the literature. First, we show that optimal
$l$-diverse generalization is NP-hard even when there are only 3
distinct sensitive values in the microdata. Then, an $(l \cdot
d)$-approximation algorithm is developed, where $d$ is the
dimensionality of the underlying dataset. This is the first known
algorithm with a non-trivial bound on information loss. Extensive
experiments with real datasets validate the effectiveness and efficiency
of proposed solution.

\end{abstract}

\section{Introduction}\label{sec:intro}

Privacy preserving publication has become an active topic in
databases. An important problem is the prevention of {\em linking
attacks} \cite{s01,s02-b}. To explain this threat, assume that a
hospital releases the patients' details in
Table~\ref{tbl:intro-micro}, called the {\em microdata}, to medical
researchers. {\em Disease} is a {\em sensitive attribute} (SA)
because a patient's disease is regarded as her/his privacy.
Attribute {\em Name} is {\em not} part of the table, but it will be
used to facilitate tuple referencing. Consider an adversary that
knows (i) the age ($< 30$), gender (M) and education level
(bachelor) of Calvin, and (ii) Calvin has a record in the microdata.
Thus, s/he easily finds out that Tuple 3 is Calvin's record and
hence, Calvin contracted pneumonia.

In the above attack, columns {\em Age}, {\em Gender}, and {\em
Education} are {\em quasi-identifier} (QI) attributes because they
can be combined to reveal an individual's identity. The cause of
privacy leakage is that an individual (e.g., Calvin) may have a
unique set of QI values. A common approach for fixing this problem
is {\em generalization}, which partitions the microdata into {\em
QI-groups}, and then, converts the QI values in each group to the
same form, e.g., replaces distinct values on each QI attribute with stars. For example, Table~\ref{tbl:intro-micro-2-anony} shows a
generalization of Table~\ref{tbl:intro-micro}, based on a partition
of four QI-groups. Notice that, in the second QI-group, the ages of
Tuples 3 and 4 have been {\em suppressed} into stars, since their
original values are different.

\begin{table}[t]
\includegraphics[height=36mm]{./artwork/intro}
\tblcapup \caption{The microdata} 
\label{tbl:intro-micro}
\end{table}

\begin{table}[t]
\hspace{2mm}
\includegraphics[height=36mm]{./artwork/intro-2-anony}
\tblcapup  \caption{2-anonymous publication} \tblcapdown
\label{tbl:intro-micro-2-anony}
\end{table}

A generalized table can be released if it satisfies an {\em
anonymization principle}, which determines the quality of privacy
protection. The earliest principle is {\em $k$-anonymity} \cite{s01,s02-b}, which requires each QI-group to contain at least $k$ tuples.
As a result, each tuple carries the same QI values as at least $k -
1$ other tuples. For instance, Table~\ref{tbl:intro-micro-2-anony}
is 2-anonymous. Given this table, the adversary mentioned earlier
cannot tell whether Tuple 3 or 4 belongs to Calvin.

Machanavajjhala et al.\ \cite{mgk+07} observe that $k$-anonymity
suffers from the {\em homogeneity problem}: a QI-group may have too
many tuples with the same SA (sensitive attribute) value. For
example, both tuples in the first QI-group of
Table~\ref{tbl:intro-micro-2-anony} have HIV. As a result, an
adversary having the QI particulars of Adam (or Bob) can assert that
Adam (Bob) has HIV, without having to identify the tuple owned by
Adam (Bob). Note that the problem cannot be eliminated by increasing
$k$, because $k$-anonymity places no constraint on the SA values in
each QI-group.

The above problem has led to the development of numerous {\em
SA-aware} principles, which set forth conditions to be fulfilled by
the SA values in each QI-group. Among the existing principles, {\em
$l$-diversity} \cite{mgk+07} is the most widely deployed
\cite{gkkm07,kg06,mgk+07,wlfw06,xt06-b,xt07}, due to its
simplicity and good privacy guarantee. Specifically, this principle
demands\footnote{Precisely speaking, $l$-diversity requires each
QI-group to have at least $l$ well-represented values. There are
different interpretations of ``well-represented" \cite{mgk+07}. The
version discussed here is widely adopted in the literature
\cite{gkkm07,wfwp07,xt06-b}.} that, in each QI-group, at most $1 /
l$ of its tuples can have an identical SA value.
Table~\ref{tbl:intro-micro-2-div} demonstrates a 2-diverse
generalization of Table~\ref{tbl:intro-micro}. It can be easily
verified that, in each QI-group, the frequency of each SA value is
at most 50\%. Thus, even if an adversary figures out the QI-group
containing the record of an individual, s/he can determine the real
SA value of the individual with no more than 50\% confidence.

\begin{table}[t]
\centering
\includegraphics[height=36mm]{./artwork/intro-2-div}
\tblcapup  \caption{2-diverse publication} \tblcapdown
\label{tbl:intro-micro-2-div}
\end{table}

\subsection{Theory Stops at {\large $k$}-anonymity} \label{sec:intro-motivate}

The goal of privacy preserving publication is to minimize the
information loss (e.g., the number of stars used in
Tables~\ref{tbl:intro-micro-2-anony} and
\ref{tbl:intro-micro-2-div}) in enforcing the selected anonymization
principle. The existing solution can be divided into two categories:
{\em theoretical} and {\em heuristic}. The former develops
algorithms with {\em worst-case} performance bounds. The
latter, on the other hand, designs algorithms that work well on many
real datasets, but may have very poor performance (i.e., incur
gigantic information loss) on ``unfriendly" inputs.

We notice that, in terms of privacy protection, the theoretical
category significantly lags behind its heuristic counterpart. As
reviewed in Section~\ref{sec:related}, many heuristic algorithms
exist for various SA-aware principles that ensure strong privacy
preservation. However, as surveyed next, all the theoretical results
concern with only $k$-anonymity, and none of them deals with SA-aware
principles. In other words, currently all the theoretical algorithms
suffer from the homogeneity problem mentioned earlier, and thus, are
weak in privacy guarantees. Note that, this drawback also reduces
the practical usefulness of their nice bounds of information loss,
because a publisher puts privacy at a higher priority than utility.

In the theoretical category, Meyerson and Williams \cite{mw04} are
the first to establish the complexity of optimal $k$-anonymity, by showing that it is NP-hard to compute a $k$-anonymous table that contains the minimum number of stars (i.e., suppressed values). They also
provide a $O(k \log k)$-approximation algorithm. Aggarwal et al.\
\cite{afk05} offer a stronger NP-hardness proof that requires a
smaller domain of the QI attributes. They also improve the
approximation ratio to $O(k)$. Park and Shim \cite{ps07} enhance the
ratio further to $O(\log k)$. It should be noted that, the
algorithms in \cite{mw04,ps07} have running time exponential in
$k$, while the running time of the algorithm in \cite{afk05} is a polynomial
of $k$ and $n$. Du et al.\ \cite{dxt+07} consider the case when the generalized table is produced not by replacing QI values with stars, but by applying {\em multi-dimensional generalization} (see Section~\ref{sec:related}). They show that enforcing $k$-anonymity in this setting is still NP-hard, and give an
$O(d)$ approximation algorithm, where $d$ is the number of QI
attributes. Aggarwal et al.\ \cite{afk+06} propose {\em
clustering-based generalization}, prove the NP-hardness of
$k$-anonymity (in \cite{afk+06}, $k$ is replaced by $r$), and
provide constant approximation solutions.

\subsection{Our Results} \label{sec:intro-cont}

This paper presents the first theoretical study on $l$-diverse
anonymization. In particular, we consider that the microdata is anonymized by suppressing QI values, and we aim at achieving $l$-diversity with the minimum number of stars. At first glance, a simple reduction from
$k$-anonymity seems to establish the NP-hardness of $l$-diversity.
Specifically, given a table where no two tuples have the same SA
value, the optimal $l$-diversity generalization is also the optimal
``$l$-anonymity" generalization. Hence, if there was an optimal
$l$-diverse algorithm that runs in polynomial time, the same
algorithm can efficiently solve optimal $k$-anonymity as well, which contradicts the NP-hardness of optimal $k$-anonymity.

The previous reduction requires that the number $m$ of distinct SA
values is as large as the cardinality $n$ of the microdata.
Therefore, a natural question is whether optimal $l$-diversity can
be settled in polynomial time if $m \ll n$, as is true in practice.
In fact, the answer is apparently ``yes" for $m = 2$, in which case
the problem becomes bipartite matching (see
Section~\ref{sec:hardness}), a well-known polynomial-time solvable
problem. The first major contribution of our work is a proof showing that
$l$-diversity is NP-hard as long as $m \ge 3$. Clearly, this result
is much stronger than the hardness result from the earlier simple
reduction. In fact, our result still holds even if the alphabet
(i.e., the domain union of all attributes) has a size of only $m +
1$.

On the algorithm side, we propose a solution that ensures an approximation ratio of $l \cdot d$, where $d$ is the the number of QI attributes in the microdata. This is the first algorithm on $l$-diversity with a non-trivial bound of information loss. Furthermore, our algorithm is also highly efficient -- it runs in close-to-linear time. Although the $l \cdot d$ approximation ratio may trigger concerns about the usefulness of our technique in practice, we note that the actual performance of our algorithm is much better than the theoretical bounds. Specifically, our algorithm executes in three phases, and depending on the dataset characteristics, may finish in any phase. Termination in the first one results in a $d$-approximate solution, while termination at the second phase incurs at most $l \cdot d$ additional stars (with respect to the $d$-approximation). In any case, an $(l \cdot d)$-approximate solution is guaranteed after the third phase. On the large set of datasets tested in our experiments, our algorithm always terminates before the third phase, thus achieving an approximation ratio of $d$. In addition, our algorithm can be further improved, when we combine it with a heuristic-based $l$-diversity solution. Empirical evaluation shows that, such a hybrid method significantly outperforms the existing $l$-diversity algorithms, in terms of the number of stars required in anonymization.

The rest of the paper is organized as follows.
Section~\ref{sec:related} surveys the previous work relevant to
ours. Section~\ref{sec:prob} formally defines the problem. Section~\ref{sec:hardness}
establishes the hardness of the problem, while
Section~\ref{sec:tmin} presents our approximation algorithm and proves
its quality guarantees. Section~\ref{sec:exp} experimentally
evaluates the effectiveness and efficiency of the proposed
technique. Finally, Section~\ref{sec:conclude} concludes the paper
with directions for future work.

\section{Related Work} \label{sec:related}

The existing theoretical results on privacy preserving publication
have been explained in Section~\ref{sec:intro-motivate}. In the
following, we focus on other approaches based on four categories.

\extraspacing
\noindent {\bf Anonymization methodologies } Most of the existing work on microdata publication adopts generalization to anonymize data. There exist three variations of generalization, namely, {\em suppression} \cite{aw89}, {\em single-dimensional generalization} \cite{i02,ba05,fwy05,xwp+06,wlfw06}, and {\em multi-dimensional generalization} \cite{ldr06-a,ldr06-b,gkkm07}. Suppression replaces distinct QI values in each QI-group with stars, as demonstrated in Section~\ref{sec:intro}. Single-dimensional generalization, on the other hand, divides the domain of each QI attribute into {\em disjoint} sub-domains, and maps each QI value in the microdata to the sub-domain that contains the value, i.e, it ``coarsens'' the domains of the QI attributes. For example, Table~\ref{tbl:related-single-dimen} illustrates a single-dimensional generalization of Table~\ref{tbl:intro-micro} that satisfies $2$-diversity. In particular, the domain of {\em Age} ({\em Education}) is divided into two sub-domains, ``$<$50'' and ``$\ge$50'' (``High school or below'' and ``Bachelor or above''). Multi-dimensional generalization is an extension of single-dimensional generalization: it allows QI values to be mapped to overlapping sub-domains. For instance, Table~\ref{tbl:related-multi-dimen} shows a $2$-diverse multi-dimensional generalization of Table~\ref{tbl:intro-micro}.

\begin{table}[t]
\centering
\includegraphics[height=32mm]{./artwork/intro-2-div-single}
\tblcapup  \caption{Single-dimensional generalization} 
\label{tbl:related-single-dimen}
\end{table}

\begin{table}[t]
\centering
\includegraphics[height=33mm]{./artwork/intro-2-div-multi}
\tblcapup  \caption{Multi-dimensional generalization} \tblcapdown
\label{tbl:related-multi-dimen}
\end{table}

As multi-dimensional generalization imposes fewer constrains on how the QI values should be transformed, it can retain more information in the anonymized tables than suppression and single-dimensional generalization. For example, it can be verified that each QI value in Table~\ref{tbl:related-multi-dimen} is equally or more accurate than the corresponding value in Table \ref{tbl:intro-micro-2-div} or \ref{tbl:related-single-dimen}. However, suppression and single-dimensional generalization have a significant advantage over multi-dimensional generalization: the anonymized data they produce that can be directly used by off-the-shelf softwares (e.g., SAS \cite{sas}, SPSS \cite{spss}, Stata \cite{stata}) designed for microdata analysis. Specifically, tables with suppressed values can be processed as microdata with missing entries. On the other hand, any single-dimensional generalization can be treated as a microdata table defined over attributes with coarsened domains, i.e., all analysis on the data are performed by regarding each sub-domain of a QI attribute as a unit value.

In contrast, multi-dimensional generalization results in data that cannot be handled by existing softwares, due to the complex relationships among QI values represented by overlapping sub-domains. To understand this, consider that a user wants to count the number of individuals in Table~\ref{tbl:related-multi-dimen} with ages in [30, 50). In that case, the user has to take into account not only the tuples with an {\em Age} value [30, 50), but also those with a value ``$<$50'', which is non-trivial since it is difficult to decide how those tuples may contribute to the query result. In general, performing analysis (e.g., regression, classification) on overlapping sub-domains is highly complicated, and hence, is not supported by off-the-shelf statistical softwares. This explains why existing anonymization systems, like $\mu$-Argus \cite{hw96} and Datafly \cite{s97}, adopt suppression and single-dimensional generalization instead of multi-dimensional generalization.

In summary, suppression and single-dimensional generalization are more preferable, if the data publisher aims to release data that can be easily used by ordinary users; otherwise, multi-dimensional generalization can be adopted. An interesting question is, how does suppression compare with single-dimensional generalization? To answer this question, in Section~\ref{sec:exp-kl}, we will experimentally evaluate our suppression algorithms against the existing single-dimensional generalization methods.

Besides generalization, there also exists other methodologies for privacy preserving data publication. Kifer and Gehrke
\cite{kg06} propose {\em marginal publication}, which releases
different projections of the microdata onto various sets of
attributes. Xiao and Tao \cite{xt06-b} advocate {\em anatomy} that
publishes QI and SA values directly in separate tables. Aggarwal and
Yu \cite{as04} design the {\em condensation method}, which releases
only selected statistics about each QI-group. Rastogi et al.\
\cite{rhs07} employ the {\em perturbation} approach.

\extraspacing
\noindent {\bf Anonymization principles } Privacy protection must take into
account the knowledge of adversaries. A common assumption is that an
adversary has the precise QI values of all individuals in the
microdata. Indeed, these values can be obtained, for example, by
knowing a person or consulting an external source such as a voter
registration list \cite{s02-b}.

Under this assumption, both $k$-anonymity and $l$-diversity aim at
preventing the accurate inference of individuals' SA values. Many
other principles share this objective. {\em $(\alpha, k)$-anonymity}
\cite{wlfw06} combines the previous two principles: each QI-group
must have size $k$ and at most $\alpha$ percent of its tuples can
have the same SA value. {\em $m$-invariance} \cite{xt07} is a
stricter version of $l$-diversity, by dictating each group to have
exactly $m$ tuples with different SA values. The {\em personalized
approach} \cite{xt06-a} allows each individual to specify her/his
own degree of privacy preservation. The above principles deal with
categorical SAs, whereas $(k, e)$-anonymity \cite{zksy07} and {\em
$t$-closeness} \cite{llv07} support numerical ones. $(k,
e)$-anonymity demands that each QI-group should have size at least
$k$, and the largest and smallest SA values in a group must differ
by at least $e$. $t$-closeness requires that the SA-distribution in
each QI-group should not deviate from that of the whole microdata by
more than $t$.

{\em $\delta$-presence} \cite{nac07} assumes the same background
knowledge as the earlier principles, but ensures a different type of
privacy. It prevents an adversary from knowing whether an individual
has a record in the microdata. {\em $(c, k)$-safety} \cite{mkm+07}
tackles stronger background knowledge. In addition to individuals'
QI values, an adversary may have several pieces of {\em
implicational knowledge}: ``if person $o_1$ has sensitive value
$v_1$, then another person $o_2$ has sensitive value $v_2$". $(c,
k)$-safety guarantees that, if an adversary has at most $k$ pieces of such knowledge, s/he will not be able to infer any individual's SA value with a confidence higher than $c$. Achieving a similar purpose, the {\em
skyline privacy} \cite{crl07} guards against an extra type of
knowledge. Namely, an adversary may have already known the sensitive
values of some individuals before inspecting the published contents.


\extraspacing
\noindent {\bf Generalization algorithms } Numerous heuristic algorithms have
been developed to compute generalization with small information
loss. Although with no provably good worst-case quality or complexity guarantees,
these algorithms are general, since they can be applied to many of
the anonymization principles reviewed earlier, and work
with both numerical and categorical domains. Specifically, a
genetic algorithm is developed in \cite{i02}, and the
branch-and-bound paradigm is employed on a set-enumeration tree in
\cite{ba05,lw10}. Top-down and bottom-up algorithms are presented in
\cite{fwy05,xwp+06}, and the method in \cite{ldr05} borrows ideas
from frequent item set mining. While all the above algorithms adopt single-dimensional generalization, there also exist several multi-dimensional generalization methods. In \cite{ldr06-a}, an algorithm is developed based on a partitioning approach reminiscent of kd-trees. This algorithm is further improved in \cite{ldr06-b} to optimize anonymized data for given workloads. In \cite{gkkm07}, space filling curves are leveraged to facilitate generalization, and the work of \cite{in07} draws an analogy between spatial indexing and generalization. As
shown in \cite{wfwp07}, the previous algorithms may suffer from {\em
minimality attacks}, which can be avoided by introducing some
randomization.

\extraspacing

\noindent {\bf Anonymity in other contexts } The earlier discussion focuses on
data publication, whereas anonymity issues arise in many other
environments. Some examples include anonymized surveying \cite{as00,egs03}, statistical databases \cite{bdm05},
cryptographic computing \cite{jc06}, access control
\cite{bbfs98}, and so on.

\section{Problem Definitions} \label{sec:prob}

Let $\mathcal{T}$ be the raw microdata table, which has $d$
quasi-identifier (QI) attributes $A_1$, ..., $A_d$, and a sensitive
attribute (SA) $B$. Here, $d$ is the {\em dimensionality} of
$\mathcal{T}$, and all attributes are categorical. Given a tuple $t
\in \mathcal{T}$, we employ $t[A_i]$ to denote its $i$-th ($1 \le i
\le d$) QI value, and $t[B]$ its sensitive value. Use $n$ to
represent the cardinality of $\mathcal{T}$, and $m$ to represent the
number of distinct sensitive values in $\mathcal{T}$.  Without loss of
generality we assume that all SA values are from the integer domain $[m] =
\{1, \dots, m\}$.

As in most pervious work on theoretical generalization algorithms, we assume that $\mathcal{T}$ is anonymized with suppression, which can be formally defined based on the concept of {\em partition}. Specifically, a partition $P$ of $\mathcal{T}$ includes disjoint subsets of
$\mathcal{T}$ whose union equals $\mathcal{T}$. We refer to each
subset as a {\em QI-group}. $P$ determines an anonymization
$\mathcal{T}^*$ of $\mathcal{T}$, where all tuples in the same
QI-group carry the same QI values, as shown next.

\begin{definition} [Generalization] \label{def:prob-suppress} \em
A partition $P$ of $\mathcal{T}$ defines a {\em generalization}
$\mathcal{T}^*$ of $\mathcal{T}$ as follows. For each QI-group in
$P$, if all the tuples in the group have the same value on $A_i$ ($i
\in [1, d]$), then they keep this value in $\mathcal{T}^*$;
otherwise, their $A_i$ values are replaced with `*'. All tuples in
$\mathcal{T}$ retain their SA values in $\mathcal{T}^*$.
\end{definition}

For example, Table~\ref{tbl:intro-micro-2-anony} (or
\ref{tbl:intro-micro-2-div}) is a generalization of
Table~\ref{tbl:intro-micro} determined by a partition with 4 (3)
QI-groups. As long as one QI value of a tuple is changed to a star,
we say that this tuple has been {\em suppressed}.

\begin{definition} [$l$-diversity] \label{def:prob-diversity} \em
Given an integer $l$, a set $S$ of tuples is {\em $l$-eligible} if
at most $|S| / l$ of the tuples have an identical SA value. A
generalization $\mathcal{T}^*$ is {\em $l$-diverse} if each QI-group
is $l$-eligible.
\end{definition}

We are ready to define the problem of optimal $l$-diverse
generalization.

\begin{problem} [Star Minimization] \label{prob:prob-star-min} \em
Given a microdata table $\mathcal{T}$ and an integer $l$, find an
optimal $l$-diverse generalization of $\mathcal{T}$ that has the
smallest number of stars.
\end{problem}

Note that there may be multiple optimal solutions with the same
number of stars. An important property of $l$-diversity is {\em
monotonicity}:

\begin{lemma}[\cite{mgk+07}] \label{lmm:prob-mono}
Let $S_1$ and $S_2$ be two disjoint sets of tuples. If both of them
are $l$-eligible, then so is $S_1 \cup S_2$.
\end{lemma}

As an immediate corollary, Problem~\ref{prob:prob-star-min} has a
solution if and only if $\mathcal{T}$ itself is $l$-eligible, i.e.,
at most $|\mathcal{T}| / l$ tuples of $\mathcal{T}$ carry the same
sensitive value. In the following, we focus on only such microdata
tables. It follows that $m \ge l$, where $m$ is the number of distinct sensitive
values in $\mathcal{T}$, as mentioned before.

A close companion of star minimization
(Problem~\ref{prob:prob-star-min}) is {\em tuple minimization}:

\begin{problem} [Tuple Minimization] \label{prob:prob-tuple-min} \em
Given a microdata table $\mathcal{T}$ and an integer $l$, find an
optimal $l$-diverse generalization of $\mathcal{T}^*$ that suppresses the least
number of tuples.
\end{problem}

For instance, in Table~\ref{tbl:intro-micro-2-div}, the amount of
information loss is 8 (stars) in Problem~\ref{prob:prob-star-min},
but 4 (tuples) in Problem~\ref{prob:prob-tuple-min}. Tuple
minimization is different from star minimization because suppressing
various tuples may require different numbers of stars. The following
result builds a connection between the two problems.

\begin{lemma} \label{lmm:prob-reduction}
  A $\lambda$-approximate solution to Problem~\ref{prob:prob-tuple-min} is
  a $\lambda \cdot d$-approximate solution to Problem~\ref{prob:prob-star-min}.
\end{lemma}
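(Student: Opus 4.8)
The plan is to exploit the fact that Problems~\ref{prob:prob-tuple-min} and~\ref{prob:prob-star-min} range over exactly the same set of feasible solutions --- the $l$-diverse generalizations of $\mathcal{T}$ --- and to relate their two cost measures on a per-solution basis. For a generalization induced by a partition $P$, write $\mathrm{str}(P)$ for its number of stars and $\mathrm{tup}(P)$ for its number of suppressed tuples. The first step is the elementary sandwich $\mathrm{tup}(P) \le \mathrm{str}(P) \le d \cdot \mathrm{tup}(P)$: each suppressed tuple carries at least one starred QI value (by the definition of ``suppressed'') and at most $d$ of them (since there are only $d$ QI attributes), while every non-suppressed tuple contributes no stars at all.

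Next I would compare the two optima. Let $P^\star$ be an optimal solution to star minimization and $P^\dagger$ an optimal solution to tuple minimization, and set $\mathrm{OPT}_{\mathrm{str}} = \mathrm{str}(P^\star)$ and $\mathrm{OPT}_{\mathrm{tup}} = \mathrm{tup}(P^\dagger)$. Since $P^\star$ is itself a feasible solution for tuple minimization, the left inequality of the sandwich yields $\mathrm{OPT}_{\mathrm{tup}} \le \mathrm{tup}(P^\star) \le \mathrm{str}(P^\star) = \mathrm{OPT}_{\mathrm{str}}$. This is the one slightly non-obvious step and the crux of the argument: it says that minimizing suppressed tuples never forces more suppressed tuples than the star-optimal solution already uses.

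Finally I would chain the inequalities. Let $P$ be a $\lambda$-approximate solution to Problem~\ref{prob:prob-tuple-min}, so $\mathrm{tup}(P) \le \lambda \cdot \mathrm{OPT}_{\mathrm{tup}}$. Reading $P$ instead as a solution to Problem~\ref{prob:prob-star-min} and applying the right inequality of the sandwich followed by the two bounds just established,
\begin{equation*}
\mathrm{str}(P) \;\le\; d \cdot \mathrm{tup}(P) \;\le\; d\lambda \cdot \mathrm{OPT}_{\mathrm{tup}} \;\le\; d\lambda \cdot \mathrm{OPT}_{\mathrm{str}},
\end{equation*}
so $P$ is a $(\lambda \cdot d)$-approximation for star minimization, as claimed. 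I do not expect a genuine obstacle here; the only point demanding care is keeping the two cost measures and their two distinct optima apart, and invoking the comparison $\mathrm{OPT}_{\mathrm{tup}} \le \mathrm{OPT}_{\mathrm{str}}$ in the correct direction.
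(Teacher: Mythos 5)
Your proposal is correct and follows essentially the same route as the paper's proof: the sandwich $\mathrm{tup}(P) \le \mathrm{str}(P) \le d\cdot\mathrm{tup}(P)$, the observation that the star-optimal solution is feasible for tuple minimization (giving $\mathrm{OPT}_{\mathrm{tup}} \le \mathrm{OPT}_{\mathrm{str}}$), and the same chain of inequalities. No substantive difference.
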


\begin{proof}
  Let $\mathcal{T}^*_1$ and $\mathcal{T}^*_2$ be optimal solutions to
  Problems~\ref{prob:prob-star-min} and \ref{prob:prob-tuple-min},
  respectively, and let $\T^*_3$ be a $\lambda$-approximate solution to
  Problem~\ref{prob:prob-tuple-min}.  Use $\alpha_1$ and $\beta_1$ to
  denote the number of stars and the number of tuples suppressed in
  $\mathcal{T}^*_1$, respectively.  Define $\alpha_2, \beta_2$ and
  $\alpha_3, \beta_3$ in the same way for $\mathcal{T}^*_2$ and $\T^*_3$,
  respectively.  Since each suppressed tuple introduces between 1 and $d$
  stars, it holds that $\beta_i \le \alpha_i \le d \cdot \beta_i$ for $i =
  1, 2, 3$.  Hence, $\alpha_3 \le d \cdot \beta_3 \le \lambda \cdot d \cdot \beta_2 \le
  \lambda \cdot d \cdot \beta_1 \le \lambda \cdot d \cdot \alpha_1$.
\end{proof}

In the following sections, we will show that star minimization is
NP-hard when $m \ge 3$, and then, approach this problem through
tuple minimization.

\section{Hardness of Star Minimization} \label{sec:hardness}

As discussed in Section~\ref{sec:intro-cont}, there exists a
straightforward reduction from $l$-diversity to $k$-anonymity. This
reduction, however, works only when the number $m$ of distinct
sensitive values in the microdata table $\mathcal{T}$ equals the
number of tuples in $\mathcal{T}$. It is natural to wonder, in the
more realistic scenario $m \ll |\mathcal{T}|$, is star minimization
(Problem~\ref{prob:prob-star-min}) still NP-hard?


It is easy to observe a polynomial-time algorithm for $m = 2$. In
this case, since $l \le m$, the value of $l$ must be 2 ($l = 1$ is useless for
anonymization). Let $S_1$ ($S_2$) be the set of tuples having the
first (second) SA value. Thus, $|S_1| = |S_2| = |\mathcal{T}| / 2$; otherwise,
$\mathcal{T}$ is not 2-eligible and Problem~\ref{prob:prob-star-min}
has no solution. Then, there exists an $2$-diverse optimal generalization where each QI-group has 2 tuples, since any $2$-diverse QI-group of $\T$ with more than 2 tuples can be divided into smaller $2$-diverse QI-groups, without increasing the number of stars in generalization. Finding this generalization is an instance of bipartite matching.
Specifically, we create a bipartite graph by treating $S_1$ and
$S_2$ as sets of vertices. Draw an edge between each pair $(t_1,
t_2) \in S_1 \times S_2$. The edge has a weight equal to the number
of stars needed to generalize $t_1$ and $t_2$ into the same form. No
edge exists between vertices from the same set. An optimal 2-diverse
generalization corresponds to a minimum perfect matching between
$S_1$ and $S_2$, which can be found in $O(|\mathcal{T}|^3)$ time \cite{k55}.


The above observation has also another implication. In
\cite{wlfw06}, the authors prove that $(\alpha, k)$-anonymity
(explained in Section~\ref{sec:related}) is NP-hard. They do so by
showing that $(0.5, k)$-anonymity is NP-hard. Recall that $(0.5,
k)$-anonymity is essentially the combination of $k$-anonymity and
2-diversity. Intuitively, the hardness of $(0.5, k)$-anonymity stems
from the difficulty of $k$-anonymity. Indeed, the proof in
\cite{wlfw06} no longer holds, when $k$-anonymity is not required.

Next, first assuming $l = 3$, we establish the NP-hardness of star minimization for any $m \ge l$. Later, we will extend the analysis to any $l > 3$. Our derivation is based on a reduction from a classical NP-hard
problem {\em 3-dimensional matching} (3DM) \cite{k72}. Specifically,
let $D_1$, $D_2$, $D_3$ be three dimensions with disjoint domains,
and these domains are equally large: $|D_1| = |D_2| = |D_3| = n$.
The input is a set $S$ of $d \ge n$ distinct 3D points $p_1$, ...,
$p_d$ in the space $D_1 \times D_2 \times D_3$. The goal of 3DM is
to decide the existence of an $S' \subseteq S$ such that $|S'| = n$
and no two points in $S'$ share the same coordinate on any
dimension. For example, assume $D_1 = \{1, 2, 3, 4\}$, $D_2 = \{a,
b, c, d\}$, and $D_3 = \{\alpha, \beta, \gamma, \delta\}$, and a set
$S$ of 6 points in Figure~\ref{fig:hardness-ex}a. Then, the result
of 3DM is ``yes": a solution $S'$ can be $\{p_1, p_3, p_5, p_6\}$.


\begin{figure}[t]
\centering
\includegraphics[height=7mm]{./artwork/hardness-ex} \\[1mm]
(a) The contents of $S$ \\[3mm]
\includegraphics[height=40mm]{./artwork/hardness-ex-tbl} \\[1mm]
(b) The constructed table $\mathcal{T}$ ($m=8$)
\figcapup
\caption{Illustration of reduction}
\figcapdown
\label{fig:hardness-ex}
\end{figure}

%
%

Let $v_1, ..., v_n$ be the values in $D_1$, $v_{n+1}, ..., v_{2n}$ be the values in $D_2$, and $v_{2n+1}, ..., v_{3n}$ be the values in $D_3$. We construct a microdata table $\mathcal{T}$ from $S$. Specifically, $\mathcal{T}$ has
\begin{itemize}
\item a sensitive attribute $B$;

\item $d$ QI attributes $A_1$, $A_2$, ..., $A_d$, where $A_i$
($1 \le i \le d$) corresponds to the $i$-th point $p_i$ in $S$;

\item $3n$ rows, where the $j$-th ($1 \le j \le 3n$) row corresponds to $v_j$.
\end{itemize}
The rows in $\mathcal{T}$ are constructed as follows. Let $t$ be the $j$-th ($1 \le j \le 3n$) row of $\mathcal{T}$. We first select a positive integer $u$ according to the value of $j$ (details to be clarified shortly). Then, we set the SA value of $t$ to $u$, i.e., $t[B] = u$. After that, for each $i \in [1, d]$, we set $t[A_i]$ to $0$ if $v_j$ is a coordinate of point $p_i \in S$, or $u$ otherwise. Because each $p_i$ has three coordinates, the following property of $\mathcal{T}$ holds.
\begin{property} \label{lmm:hardness-three-0}
For any $i \in [1, d]$, there exist exactly $3$ rows in $\mathcal{T}$ that have value $0$ on $A_i$.
\end{property}

The value of $u$ is chosen in a way that ensures another two properties of $\mathcal{T}$. First, $\mathcal{T}$ should contain $m$ ($m \le 3n$) distinct SA values, namely, $1$, $2$, ..., $m$. Second, for any $i, j \in [1, 3n]$, if $v_i$ and $v_j$ belong to different domains (e.g., $v_i \in D_1$ and $v_j \in D_2$), the $i$-th and $j$-th rows in $\mathcal{T}$ should have different SA values.

Specifically, we set $u = j$ for any $j \in [1, m\!-\!2]$. When $j \in [m\!-\!1, 3n]$, we differentiate three cases according to the values of $m$ and $n$:
\begin{itemize}
\item If $m\!-\!1 > 2n$, we let $u = m\!-\!1$ if $j \in [m\!-\!1, 3n\!-\!1]$, and $u = m$ if $j = 3n$.

\item If $2n \ge m\!-\!1 > n$, then $u = m\!-\!1$ if $j \in [m\!-\!1, 2n]$, and $u = m$ if $j \in [2n\!+\!1, 3n]$.

\item If $n \ge m\!-\!1$, we set (i) $u = m\!-\!2$ if $j \in [m\!-\!1, n]$, (ii) $u = m\!-\!1$ if $j \in [n\!+\!1, 2n]$, and (iii) $u = m$ if $j \in [2n\!+\!1, 3n]$.
\end{itemize}

Figure~\ref{fig:hardness-ex}b demonstrates the $\mathcal{T}$ built from the $S$ in Figure~\ref{fig:hardness-ex}a, when $m = 8$. For example, let $t$ be the $7$-th row (i.e., $j = 7$), which corresponds to value $c \in D_2$. Since $j=7$, $n=4$, and $m=8$, we have $2n \ge m\!-\!1 > n$ and $j \in [m - 1, 2n]$. Hence, $u=m\!-\!1 = 7$. $t[A_3]$ equals 0, because $c$ is the second coordinate of $p_3 \in S$. $t$ has 7 on other QI attributes because $c$ is not the 2nd coordinate of any other point in $S$.

Let $\mathcal{T}^*$ be any 3-diverse generalization of
$\mathcal{T}$. We say that a QI-group $Q$ in $\mathcal{T}^*$ is {\em
futile} if all the QI values in $Q$ are stars (i.e., $Q$ retains no
QI information at all). Otherwise, $Q$ is {\em useful}. $\mathcal{T}^*$ have several properties.


%

\begin{property} \label{lmm:hardness-zero}
If a QI-group $Q$ in $\mathcal{T}^*$ is useful, then all non-star QI values in $Q$ must be $0$.
\end{property}
\begin{proof}
Consider any $i \in [1, d]$, such that $Q$ has no star on $A_i$ after generalization. Then, before generalization, all tuples in $Q$ should have the same value on $A_i$. Let this value be $x$. By the way $\mathcal{T}$ is constructed, if $x \ne 0$, all tuples in $Q$ should have an SA value $x$, which contradicts the assumption that $Q$ is $3$-eligible. Therefore, $x = 0$ holds.
\end{proof}

\begin{property} \label{lmm:hardness-exactly-3}
Any useful QI-group $Q$ in $\mathcal{T}^*$ contains (i) exactly three tuples, (ii) $3(d-1)$ stars, and (iii) $3$ zeros.
\end{property}
\begin{proof}
Let $h$ be the number of tuples in $Q$. Since $Q$ is useful, there exists $i \in [1, d]$, such that all tuples in $Q$ have value $0$ on $A_i$ (see Property~\ref{lmm:hardness-zero}). By Property~\ref{lmm:hardness-three-0}, there exist only three tuples in $\mathcal{T}$ that have value $0$ on $A_i$. Hence, $h \le 3$. On the other hand, because $Q$ is $3$-eligible, $h \ge 3$. Therefore, $Q$ contains exactly three tuples.

Consider that, before generalization, $Q$ contains three rows $t_a$, $t_b$, and $t_c$ ($a, b, c \in [1, 3n]$) in $\mathcal{T}$. Assume on the contrary that there exists $y \in [1, d]$, $y \ne i$, such that $Q$ has no star on $A_y$. Recall that, the $j$-th ($j \in [1, 3n]$) row in $\mathcal{T}$ has value $0$ on $A_i$ ($A_y$), if and only if $v_j$ is a coordinate of $p_i$ ($p_y$). Thus, each of $v_a$, $v_b$, and $v_c$ should appear in both $p_i$ and $p_y$. This indicates that $p_i = p_y$, leading to a contradiction. Therefore, $Q$ should have $0$ on exactly one QI attribute. Since $Q$ contains three tuples, the number of stars (zeros) in $Q$ should be $3d - 3$ (3).
\end{proof}


\begin{property} \label{lmm:hardness-useful-tuple}
$\mathcal{T}^*$ has at least $3n (d - 1)$ stars.
\end{property}
\begin{proof}
Let us analyze the number of non-star QI values in $\mathcal{T}^*$.
Each non-star can come from only a useful QI-group. According to
Property~\ref{lmm:hardness-exactly-3}, each such group contains 3 non-star QI values. As $\mathcal{T}^*$ has $3n$ rows and each useful QI-group
has 3 rows, there can be at most $n$ useful QI-groups. Therefore,
the number of non-star QI values is at most $3n$, and the property
follows.
\end{proof}

\begin{lemma} \label{lmm:hardness-if-and-only-if}
The 3DM on $S$ returns ``yes", if and only if there is a 3-diverse
generalization of $\mathcal{T}$ with $3n (d - 1)$ stars.
\end{lemma}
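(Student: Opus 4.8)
The plan is to prove the two directions of the equivalence separately, in each case leaning on the four structural properties already established for an arbitrary $3$-diverse generalization $\mathcal{T}^*$.

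For the forward direction (3DM answers ``yes'' $\Rightarrow$ a generalization with $3n(d-1)$ stars exists), I would start from a matching $S' \subseteq S$ with $|S'| = n$ in which no two points agree on any coordinate. Since each of the $n$ points contributes three coordinates and no coordinate is shared, these points use $3n$ distinct values; as there are exactly $3n$ values $v_1, \dots, v_{3n}$, the matching covers every value exactly once. For each $p_i \in S'$ I form a QI-group $Q_i$ consisting of the three rows of $\mathcal{T}$ whose associated values are the three coordinates of $p_i$; by the above, these $n$ groups partition all rows. The key point to check is that each $Q_i$ is $3$-eligible: its three rows correspond to one value from each of $D_1, D_2, D_3$, and by the construction rule that rows from different domains receive different SA values, the three rows carry three distinct SA values, so no value occupies more than $|Q_i|/3$ of the group. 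Finally I bound the stars: all three rows of $Q_i$ are $0$ on $A_i$, so $A_i$ contributes no star, and each of the remaining $d-1$ attributes contributes at most three stars, giving at most $3(d-1)$ stars per group and at most $3n(d-1)$ overall. Combined with the lower bound of Property~\ref{lmm:hardness-useful-tuple}, this forces exactly $3n(d-1)$ stars.

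For the reverse direction (a generalization with $3n(d-1)$ stars $\Rightarrow$ 3DM answers ``yes''), I would first sharpen the counting behind Property~\ref{lmm:hardness-useful-tuple} into an exact identity. If $\mathcal{T}^*$ has $k$ useful groups -- each with three rows and $3(d-1)$ stars by Property~\ref{lmm:hardness-exactly-3} -- and the remaining $3n - 3k$ rows lie in futile (all-star) groups, then the total number of stars equals $3k(d-1) + (3n-3k)d = 3nd - 3k$. Attaining the value $3n(d-1) = 3nd - 3n$ therefore forces $k = n$: there are exactly $n$ useful groups, no futile group, and these $n$ groups partition all $3n$ rows. The heart of the argument is then to read a matching off this rigid structure. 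By Property~\ref{lmm:hardness-exactly-3} each useful group $Q$ is $0$ on exactly one attribute $A_i$, and by Property~\ref{lmm:hardness-three-0} the only three rows that are $0$ on $A_i$ are precisely the three coordinate-rows of $p_i$; hence $Q$ equals that triple of rows. Because the groups are disjoint, distinct useful groups must use distinct attribute indices (two groups cannot both claim all three rows that are $0$ on the same $A_i$), so the $n$ useful groups determine $n$ distinct points $S' = \{p_{i_1}, \dots, p_{i_n}\}$.

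It remains to verify that $S'$ is a legal 3DM solution, which again follows from disjointness: if two chosen points shared a coordinate $v_j$, then row $j$ would be a $0$-row for both of their attributes and would have to belong to both groups, contradicting that the groups partition the rows. Hence no two points of $S'$ agree on any dimension, and since $|S'| = n$ the 3DM instance answers ``yes''. I expect the main obstacle to be the reverse direction: converting the bare star count into the exact identity $3nd - 3k$ and arguing that equality leaves no slack -- no futile group and no repeated attribute index -- so that the useful groups are forced to be exactly the coordinate-triples of a set of pairwise non-conflicting points.
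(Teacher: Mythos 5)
Your proof is correct and follows essentially the same route as the paper: the forward direction builds the $n$ coordinate-triple QI-groups from the matching exactly as the paper does, and the reverse direction extracts the matching from the rigid structure forced by Properties~\ref{lmm:hardness-three-0}--\ref{lmm:hardness-useful-tuple}. Your only deviation is to make explicit the star-count identity $3nd-3k$ that forces $k=n$ (the paper compresses this into a citation of Property~\ref{lmm:hardness-exactly-3}), which is a welcome clarification rather than a different argument.
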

\begin{proof}
``{\em Only-if direction}": Without loss of generality, let $S' =
\{p_1, ..., p_n\}$ be the solution of the 3DM. Then, we create $n$
useful QI-groups $Q_1$, ..., $Q_n$, where $Q_i$ ($1 \le i \le n$)
encloses the 3 tuples in $\mathcal{T}$ whose values on attribute
$A_i$ are 0. By the way $\mathcal{T}$ is constructed, each of the 3 tuples corresponds to a
coordinate of $p_i$, and has a distinct SA value. Hence, $Q_i$ is
3-eligible. Since the points in $S'$ do not share any coordinate,
$Q_1$, ..., $Q_n$ are mutually disjoint, and hence, their union
covers the entire $\mathcal{T}$ (which has $3n$ tuples in total).
Generalizing each $Q_i$ ($1 \le i \le n$) introduces $3(d-1)$ stars,
leading to totally $3n(d-1)$ stars. The resulting generalization is
3-diverse.

``{\em If direction}": Let $\mathcal{T}^*$ be a 3-diverse
generalization with $3n(d-1)$ stars. According to
Property~\ref{lmm:hardness-exactly-3}, $\mathcal{T}^*$ has exactly $n$
QI-groups, and all of them are useful. Denote these groups as $Q_1$,
..., $Q_n$. Since each group contributes $3(d-1)$ stars, $Q_x$ ($1
\le x \le n$) has no star on exactly one QI-attribute $A_i$ ($i \in [1, d]$). Let us call
$A_i$ the {\em useful QI-attribute} of $Q_x$. Assume that, before generalization, $Q_x$ contains three rows $t_a$, $t_b$, and $t_c$ ($a, b, c \in [1, 3n]$), where $t_j$ denotes the $j$-th row in $\mathcal{T}$. Since $t_a[A_i] = t_b[A_i] = t_c[A_i] = 0$, according to the way $\mathcal{T}$ is generated, $v_a$, $v_b$, and $v_c$ should be the coordinates of $p_i$. We define $p_i$ as the point in $S$ {\em corresponding to} $Q_x$. Observe that, the useful QI attributes of any two QI-groups must be different, otherwise there exist at least six tuples in $\mathcal{T}$ that have $0$ on the same QI attribute, contradicting Property~\ref{lmm:hardness-three-0}. Consequently, each QI-group should correspond to a distinct point in $S$. Without loss of generality, assume that $p_j$ is the point corresponding to $Q_j$ ($1 \le j \le n$). Let $S' = \{p_1, ..., p_n\}$. Since each row in $\mathcal{T}$ appears in exactly one QI-group, each coordinate in $D_1 \cup D_2 \cup D_3$ appears in exactly one point in $S'$. Therefore, $S'$ is a solution to the 3DM.
\end{proof}

Property~\ref{lmm:hardness-useful-tuple} and
Lemma~\ref{lmm:hardness-if-and-only-if} imply that we can decide
whether $S$ has a 3DM solution, by examining if an optimal 3-diverse
generalization of $\mathcal{T}$ has $3n (d - 1)$ stars. Therefore,
if we had an optimal polynomial-time $l$-diversity algorithm that
works on all microdata tables with $m \in [l, |\mathcal{T}|]$, this algorithm would
also solve 3DM in polynomial time.

Extending the above analysis in a straightforward manner, we can
show that, for any $l > 3$, optimal $l$-diversity under the
constraint $m \ge l$ is also NP-hard, through a reduction from {\em
$l$-dimensional matching} \cite{hss03}. Thus, we arrive at:

\begin{theorem} \label{thrm:hardness-NP}
For any $m \ge l \ge 3$, optimal $l$-diverse generalization
(Problem~\ref{prob:prob-star-min}) is NP-hard.
\end{theorem}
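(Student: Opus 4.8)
The plan is to replay the $l=3$ reduction almost verbatim, replacing 3-dimensional matching by $l$-dimensional matching ($l$DM), which is NP-hard for every fixed $l \ge 3$ \cite{hss03}. An instance of $l$DM supplies $l$ pairwise-disjoint domains $D_1, \ldots, D_l$, each of size $n$, together with a set $S = \{p_1, \ldots, p_d\}$ of points in $D_1 \times \cdots \times D_l$, and asks whether some $S' \subseteq S$ with $|S'| = n$ covers every coordinate exactly once. From such an instance I would build $\mathcal{T}$ exactly as in the $l=3$ case but scaled to $l$ dimensions: $\mathcal{T}$ has $d$ QI attributes $A_1, \ldots, A_d$ (with $A_i$ associated to $p_i$) and $ln$ rows (row $j$ associated to the $j$-th value $v_j$ of $D_1 \cup \cdots \cup D_l$), where $t[A_i] = 0$ if $v_j$ is a coordinate of $p_i$ and $t[A_i] = t[B]$ otherwise. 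Since each point now has $l$ coordinates, Property~\ref{lmm:hardness-three-0} becomes: exactly $l$ rows carry $0$ on each $A_i$.

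The sensitive values must still be chosen so that (a) $\mathcal{T}$ uses precisely $m$ distinct values and (b) rows from different domains receive different values. Condition (b) is secured by partitioning the $m$ values into $l$ disjoint blocks, one per domain, block $k$ receiving $c_k$ values with $1 \le c_k \le n$ and $\sum_k c_k = m$; inside the $n$ rows of $D_k$ one then assigns its $c_k$ values so that each is used at least once. Such a partition exists precisely when $l \le m \le ln$, which is the uniform replacement for the explicit three-case split used at $l=3$; since $l$DM is hard for arbitrarily large $n$, any target $m \ge l$ is covered by taking $n \ge \lceil m/l \rceil$. The essential payoff is that the $l$ coordinate-rows of a single point $p_i$ lie in $l$ distinct domains and hence carry $l$ pairwise-distinct sensitive values.

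With this in place, Properties~\ref{lmm:hardness-zero}--\ref{lmm:hardness-useful-tuple} lift with only numerical changes. The proof that a useful group's non-star entries are all $0$ is unchanged. For the size of a useful group $Q$: having no star on some $A_i$ forces every tuple of $Q$ to lie among the $l$ coordinate-rows of $p_i$, so $|Q| \le l$; and because those rows carry distinct sensitive values, $l$-eligibility (i.e.\ $1 \le |Q|/l$) forces $|Q| \ge l$, whence $|Q| = l$. The argument that $Q$ has $0$ on exactly one attribute, else two points of $S$ coincide, carries over, yielding $l(d-1)$ stars and $l$ zeros per useful group, at most $n$ useful groups, and a star lower bound of $ln(d-1)$. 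Lemma~\ref{lmm:hardness-if-and-only-if} then transfers with ``$3$'' replaced by ``$l$'' throughout: an $l$DM solution produces $n$ disjoint useful groups covering all $ln$ rows with $ln(d-1)$ stars, and conversely any generalization meeting this bound must consist of $n$ useful groups whose associated points form an $l$DM solution. Hence an optimal polynomial-time $l$-diversity algorithm would decide $l$DM.

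The step I expect to demand the most care is the sensitive-value assignment: one must verify that for the full range $l \le m \le ln$ the $m$ distinct values can be realized while keeping the $l$ domains value-disjoint, since this is the only place where the tidy $l=3$ case analysis has to be replaced by a genuinely uniform argument. Everything downstream is a mechanical substitution of ``$l$'' for ``$3$'' in the existing properties and the if-and-only-if lemma, so once the assignment is settled the theorem follows immediately.
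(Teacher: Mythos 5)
Your proposal follows essentially the same route as the paper: a reduction from $l$-dimensional matching that lifts the $l=3$ construction (zero-marked coordinate rows, domain-disjoint sensitive-value blocks, and the counting argument showing a matching exists iff some $l$-diverse generalization attains $ln(d-1)$ stars). The paper leaves the $l>3$ case as a ``straightforward extension,'' and your uniform block assignment of the $m$ sensitive values (valid whenever $l \le m \le ln$, which is ensured by taking $n \ge \lceil m/l\rceil$) correctly supplies the one detail that genuinely needs checking.
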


We conclude this section by pointing out that our proof requires
only an alphabet (i.e., the union of the domains of all attributes
in the microdata) of size $m + 1$. For example, in
Figure~\ref{fig:hardness-ex}, $m = 8$ and $\mathcal{T}$ has 9
different values 0, 1, ..., 8.



\section{Tuple Minimization} \label{sec:tmin}

This section tackles tuple minimization
(Problem~\ref{prob:prob-tuple-min}), and presents an
algorithm with an approximation ratio of $l$. By
Lemma~\ref{lmm:prob-reduction}, it leads to an $(l \cdot
d)$-approximation for star minimization, resulting in the first
$l$-diversity algorithm with a non-trivial worst-case bound on
information loss. Furthermore, this algorithm leverages several
novel heuristics that work fairly well in practice, and usually
produce a solution with a much better quality than the upper bound.

\subsection{Algorithm Overview} \label{sec:algorithm}

Since our goal is to minimize the number of tuples suppressed, we
can redefine the problem as the following.  Suppose that the
microdata $\mathcal{T}$ is partitioned into $s$ QI-groups $Q_1,
\dots, Q_s$, where tuples in the same QI-group have the same value
on every QI attribute. The problem is to remove the minimum number
of tuples from $Q_1, \dots, Q_s$, such that: (a) all QI-groups are
$l$-eligible, and (b) the set of all removed tuples is $l$-eligible.
We denote the set of removed tuples by $R$, and these tuples will
correspond to the suppressed tuples. We refer to $R$ as the {\em residue set}. Since switching tuples with the
same QI and SA values will not change the quality of the solution, in the
following we will not distinguish such tuples.  In this manner, the
QI-groups and $R$ are effectively considered as multisets.

Initially $R$ is empty.  Throughout the algorithm, tuples are only
moved to $R$ but never taken back.  We follow different rules to
pick tuples to remove in the three phases of the algorithm.  In the
first phase, we will make sure that condition (a) above is
satisfied.  If condition (b) is also met, the algorithm immediately
terminates.  Otherwise in phase two, we try to do an ``easy fix'' of
the problem by removing some more tuples from the QI-groups without
violating condition (a).  If at any time during phase two condition
(b) is met, the algorithm ends, or else we proceed to phase three.
In the last phase, we do an ``overhaul'' in order to satisfy
condition (b) by removing tuples in large batches.  Before giving
the details for each phase below, we point out that approximations
are introduced in succession: If the algorithm terminates during the
first phase, then the returned solution is guaranteed to be optimal;
if the algorithm ends in phase two, only an additive error of $l-1$
is introduced; only in phase three may we encounter a multiplicative
error of $l$.

Sections~\ref{sec:algorithm-1}-\ref{sec:algorithm-3} describe the
conceptual procedures of the three phases respectively, and analyze
their theoretical guarantees. We defer the running time discussion
to Section~\ref{sec:implementation}.

\subsection{Phase One} \label{sec:algorithm-1}

For a QI-group $Q$ and an SA value $v$, denote the number of tuples
in $Q$ with SA value $v$ by $h(Q, v)$.  We call the SA value with
the most tuples the {\em pillar SA value}, or simply the {\em
pillar}.  The number of tuples in the pillar is called the {\em
pillar height} of $Q$, denoted by $h(Q) = \max_v h(Q,v)$.  Note that
there could be more than one pillar in a QI-group.  These terms are
similarly defined on the set of removed tuples $R$.

\extraspacing \noindent {\bf Algorithm } The rule of phase one is simple: for each QI-group,
repeatedly remove one tuple from its pillar until the QI-group is
$l$-eligible. If there is more than one pillar, the choice can be
arbitrary.  Note that although we break ties arbitrarily, the end
result is unique, the reason being the following. When there are
more than one pillar in the QI-group that is still not $l$-eligible,
removing a tuple from any of the pillars will not decrease the
pillar height, and hence the QI-group will not become $l$-eligible
after the removal. Only after all pillars have lost one tuple does
the QI-group have a chance of becoming $l$-eligible.  In other
words, no matter what order is taken, we will eventually remove one
tuple from each pillar.

At the end of phase one, we check if $R$ is $l$-eligible, i.e.,
\begin{equation}
\label{eq:1} |R| \ge l \cdot h(R).
\end{equation}
If (\ref{eq:1}) holds, then the algorithm terminates; otherwise we
proceed to phase two.

Consider the example in Table~\ref{tbl:intro-micro} with $l=2$.
Initially we have 4 QI-groups: $\{1,2\}, \{3\}, \{4\}, \{5,6,7,8\},
\{9,10\}$.  After phase one, the first three QI-groups are completely
eliminated, and the other two QI-groups remain unchanged.  The set
$R$ of removed tuples have the following (multi)set of SA values:
$\{$HIV, HIV, pneumonia, bronchitis$\}$. In this case $R$ is already
$l$-eligible and thus the whole algorithm terminates.  However, if
we are not so lucky, we need to go to phase two.

\extraspacing \noindent {\bf Analysis } Let $\Qo_1, \dots, \Qo_s$ be the QI-groups at the
end of phase one, and $\Ro$ the set of removed tuples.  If
(\ref{eq:1}) holds after phase one, then $(\Qo_1, \dots, \Qo_s,
\Ro)$ must be an optimal solution. In fact, we can prove a stronger
result.

\begin{lemma}
\label{lem:phase-one}
  For any $1\le i \le s$ and any subset $Q'_i$ of $Q_i$ that is
  $l$-eligible, $h(Q'_i, v) \le h(\Qo_i, v)$ for all SA values $v$.
\end{lemma}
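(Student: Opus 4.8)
The plan is to show that $\Qo_i$ is the "best possible" $l$-eligible subset of $Q_i$ in a strong, coordinate-wise sense, and that this strength comes entirely from the greedy remove-from-the-pillar rule, whose net effect is to cap every SA-value count at a common threshold. Write $c_v = h(Q_i, v)$ for the multiplicity of each SA value $v$ in the original group, and for a threshold $H \ge 0$ let $Q_i^{(H)}$ denote the subset obtained by keeping $\min(c_v, H)$ tuples of each value $v$ (the \emph{capped} subset). I claim phase one returns $\Qo_i = Q_i^{(H^*)}$, where $H^* = h(\Qo_i)$ is the largest $H$ for which $Q_i^{(H)}$ is $l$-eligible. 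The empty subset is always $l$-eligible, so $H^*$ is well defined.

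To prove this characterization I would use the net-effect observation already noted in the text: since a tuple is removed from $v$ only while $v$ is a (tied) pillar, all counts lying strictly above a given level are pulled down in lockstep, so the configuration reached at the moment the pillar height first drops to some value $H$ is exactly $Q_i^{(H)}$. Moreover, at the intermediate moments before the pillar height drops from $H+1$ to $H$, the pillar height is still $H+1$ while the size has only decreased; hence such a configuration can be $l$-eligible only if $Q_i^{(H+1)}$ already was, in which case the process would have stopped earlier. Therefore the first $l$-eligible configuration reached is a fully capped one, and since thresholds are visited in decreasing order, it is the capped subset with the largest eligible $H$, namely $Q_i^{(H^*)}$.

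Next I would establish the one inequality that does the real work: every $l$-eligible subset $Q'_i \subseteq Q_i$ satisfies $h(Q'_i) \le H^*$. Assume $Q'_i$ is nonempty (the empty case is trivial), write $e_v = h(Q'_i, v) \le c_v$, and set $H' = h(Q'_i) = \max_v e_v$, so that $1 \le H' \le \max_v c_v$. Because $e_v \le c_v$ and $e_v \le H'$, we have $e_v \le \min(c_v, H')$, whence $|Q'_i| = \sum_v e_v \le \sum_v \min(c_v, H') = |Q_i^{(H')}|$. Since $Q'_i$ is $l$-eligible, $l \cdot H' \le |Q'_i| \le |Q_i^{(H')}|$; and as $H' \le \max_v c_v$, the pillar height of $Q_i^{(H')}$ equals $H'$, so $Q_i^{(H')}$ is itself $l$-eligible. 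By the maximality of $H^*$ this forces $H' \le H^*$. The lemma then follows by combining the two facts: for every SA value $v$ we have $h(Q'_i, v) = e_v \le c_v$ and $e_v \le H' \le H^*$, so $h(Q'_i, v) \le \min(c_v, H^*) = h(\Qo_i, v)$, as required.

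I expect the main obstacle to be the first step — turning the informal remove-from-the-pillar rule into the clean statement that $\Qo_i = Q_i^{(H^*)}$ with $H^*$ the maximum eligible threshold. Making the assertion ``the first eligible configuration is fully capped'' rigorous requires the careful net-effect argument about tied pillars, i.e. that the eligibility test $|S| \ge l \cdot h(S)$ can only flip favorably once every pillar has lost a tuple and the pillar height actually drops. Once that characterization is in hand, the remainder is the short size-counting comparison above.
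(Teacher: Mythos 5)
Your proof is correct and follows essentially the same route as the paper's: both reduce the claim to comparing $Q'_i$ with the capped multiset $Q_i^{(H')}$ at threshold $H' = h(Q'_i)$, observing that this capped set is $l$-eligible and would have been reached (and stopped at) earlier by the remove-from-the-pillar rule, which forces $h(Q'_i) \le h(\Qo_i)$. Your version merely makes explicit, as a separate characterization step, the fact that phase one outputs the capped subset at the largest eligible threshold --- a point the paper's proof uses implicitly when it asserts the algorithm ``would have stopped earlier at $Q''_i$.''
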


The proofs for the rest of the paper can be found in the appendix. Based on Lemma~\ref{lem:phase-one}, we prove the following corollary.

\begin{corollary}
\label{cor:phaseone}
  If the algorithm terminates after phase one, then
  $(\Qo_1,\dots,\Qo_s,\Ro)$ is an optimal solution.
\end{corollary}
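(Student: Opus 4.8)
The plan is to establish two things: that the phase-one output is a feasible solution to the reformulated tuple-minimization problem of Section~\ref{sec:algorithm}, and that no feasible solution removes fewer tuples. Feasibility is immediate --- each $\Qo_i$ is $l$-eligible by the stopping rule of phase one, and $\Ro$ is $l$-eligible because we assume the algorithm terminates after phase one, i.e., (\ref{eq:1}) holds. For optimality I would lean entirely on Lemma~\ref{lem:phase-one}, which guarantees that $\Qo_i$ retains the largest possible number of tuples of every SA value among all $l$-eligible subsets of $Q_i$.

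Concretely, let $(Q'_1,\dots,Q'_s,R')$ be an arbitrary feasible solution, so each $Q'_i \subseteq Q_i$ is $l$-eligible and $R' = \bigcup_i (Q_i \setminus Q'_i)$. Fix an SA value $v$ and let $n_v$ denote the (fixed) number of tuples of $\mathcal{T}$ with SA value $v$. Every such tuple is either kept in some group or moved to the residue, so $\sum_i h(Q'_i,v) = n_v - h(R',v)$ and likewise $\sum_i h(\Qo_i,v) = n_v - h(\Ro,v)$. Summing the inequality of Lemma~\ref{lem:phase-one} over the groups gives $\sum_i h(Q'_i,v) \le \sum_i h(\Qo_i,v)$; substituting the two identities collapses this to $h(\Ro,v) \le h(R',v)$ for every SA value $v$.

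Summing this last inequality over all $v$ then yields $|\Ro| = \sum_v h(\Ro,v) \le \sum_v h(R',v) = |R'|$. Since $(Q'_1,\dots,Q'_s,R')$ was arbitrary, $\Ro$ removes no more tuples than any feasible alternative, so $(\Qo_1,\dots,\Qo_s,\Ro)$ is optimal.

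The step carrying the real content is the passage from the per-group guarantee of Lemma~\ref{lem:phase-one} to a statement about the global residue counts $h(\Ro,v)$; what makes it go through is the conservation identity $\sum_i h(Q'_i,v) + h(R',v) = n_v$, which converts domination of the retained counts into domination of the removed counts. Everything else is bookkeeping. The one point I would verify carefully is that Lemma~\ref{lem:phase-one} genuinely applies to each $Q'_i$ arising in a feasible solution --- and it does, precisely because feasibility forces every $Q'_i$ to be $l$-eligible, which is exactly the lemma's hypothesis.
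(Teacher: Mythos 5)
Your proof is correct and follows essentially the same route as the paper's: both deduce optimality from Lemma~\ref{lem:phase-one} by a counting argument, the paper summing over $v$ first to get $|Q'_i| \le |\Qo_i|$ and then over $i$, while you sum over $i$ first and pass through the per-value inequality $h(\Ro,v) \le h(R',v)$ (which is in fact the route the paper takes for Corollary~\ref{cor:optlower}). The feasibility observation and the conservation identity you use are exactly the content of the paper's argument.
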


Let $OPT$ be the number of tuples in $R$ in the optimal solution.
The following lower bound on $OPT$ is another easy corollary of
Lemma~\ref{lem:phase-one} that will be useful later on.

\begin{corollary}
\label{cor:optlower}
  $OPT \ge l \cdot h(\Ro)$.
\end{corollary}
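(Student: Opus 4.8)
The plan is to derive Corollary~\ref{cor:optlower} directly from Lemma~\ref{lem:phase-one}, reusing the variable $OPT$ as the size of the residue set in the optimal solution. First I would fix an optimal solution to Problem~\ref{prob:prob-tuple-min} and, for each $i$, let $Q'_i$ be the portion of the original QI-group $Q_i$ that the optimal solution \emph{retains} (i.e.\ does not suppress). Since the optimal solution is $l$-diverse, each retained QI-group is $l$-eligible, so each such $Q'_i$ satisfies the hypothesis of Lemma~\ref{lem:phase-one}. The lemma then gives $h(Q'_i, v) \le h(\Qo_i, v)$ for every SA value $v$ and every $i$.

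Next I would translate this per-QI-group domination into a statement about the residue sets. Summing $h(Q'_i,v)\le h(\Qo_i,v)$ over $i$ (for a fixed $v$) shows that, value by value, the total number of retained tuples carrying SA value $v$ in the optimal solution is at most the number retained across the $\Qo_i$. Since every tuple of $\T$ is either retained in some QI-group or placed in the residue set, taking complements converts this into the reverse inequality on the residue counts: the multiset of SA values suppressed by the optimal solution dominates, value by value, the multiset of SA values in $\Ro$. In particular, for every SA value $v$, the optimal residue set contains at least $h(\Ro, v)$ tuples with value $v$, hence at least $h(\Ro)$ tuples carrying the pillar value of $\Ro$.

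Finally I would invoke the $l$-eligibility of the optimal residue set itself. By condition~(b) in the problem reformulation of Section~\ref{sec:algorithm}, the set of all suppressed tuples in any feasible solution is $l$-eligible, so the optimal residue set is $l$-eligible. Combined with the previous step---which says the optimal residue set has at least $h(\Ro)$ tuples sharing one common SA value---$l$-eligibility forces its total size to be at least $l$ times that common count, giving $OPT \ge l \cdot h(\Ro)$.

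The step I expect to require the most care is the complementation in the second paragraph: Lemma~\ref{lem:phase-one} bounds the \emph{retained} pillar heights, but the corollary concerns the \emph{removed} tuples, so I must be precise that the optimal $Q'_i$ are exactly the retained subsets and that summing the retained bounds and subtracting from the fixed per-value totals in $\T$ yields a valid lower bound on the optimal residue counts. Everything else---applying the lemma and then invoking $l$-eligibility of the residue set---is routine once this bookkeeping is set up correctly.
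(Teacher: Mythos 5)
Your proposal is correct and follows essentially the same route as the paper's proof: apply Lemma~\ref{lem:phase-one} to the retained subsets $Q'_i$ of an optimal solution, sum over $i$ and use the fixed per-value totals in $\T$ to conclude $h(R',v)\ge h(\Ro,v)$ for every $v$, and then invoke the $l$-eligibility of the optimal residue set to get $OPT=|R'|\ge l\cdot h(R')\ge l\cdot h(\Ro)$. The complementation step you flag as delicate is exactly the identity $h(R',v)+\sum_{i=1}^s h(Q'_i,v)=h(\Ro,v)+\sum_{i=1}^s h(\Qo_i,v)$ that the paper uses, so there is no gap.
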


\subsection{Phase Two} \label{sec:algorithm-2}
In phase two, we try to increase $|R|$ while keeping $h(R)$
unchanged by removing tuples from the QI-groups while maintaining
their $l$-eligibility. We continue the process until
Inequality~(\ref{eq:1}) is satisfied, or no more tuples can be
removed.

Before describing the phase two algorithm we need some more
terminology. We know that $|Q| \ge l \cdot h(Q)$ for any QI-group
$Q$ at the end of phase one.  We say that $Q$ is {\em thin} if $|Q|
= l \cdot h(Q)$, and {\em
  fat} if $|Q| \ge l \cdot h(Q) + 1$.  If $Q$ has one or more pillars that
are also the pillars of $R$, then $Q$ is a {\em conflicting}
QI-group; these pillars are the {\em conflicting pillars} of $Q$. If
$Q$ is both thin and conflicting, it is said to be {\em dead};
otherwise it is {\em alive}.  Intuitively, a dead QI-group cannot
lose any more tuples without either increasing $h(R)$ or violating
its own $l$-eligibility.  An SA value $v$ is {\em alive} if there
exists at least one alive QI-group $Q$ such that $h(Q,v) > 0$.

\extraspacing \noindent {\bf Algorithm } Phase two proceeds iteratively as follows. In each
iteration, we pick an alive SA value $v$ such that $h(R,v)$ is
minimized, i.e., $v$ is the least frequent alive SA
value in $R$. When there are multiple such SA values, we pick one
arbitrarily.  If there is no alive SA value, then phase two cannot
solve the problem and we enter phase three.  Otherwise, we go to the
QI-group $Q$ where $h(Q,v) > 0$; again the choice is arbitrary if
there is more than one option.  There are two cases: If $Q$ is fat,
then we simply remove a tuple from $Q$ with SA value $v$,
decrementing $h(Q,v)$ while incrementing $h(R,v)$.  If $Q$ is thin,
then by definition it must be non-conflicting, so we remove a tuple
from each of $Q$'s pillars. Note that this may or may not increase $h(R, v)$. This iteration now ends. If at this time
$R$ becomes $l$-eligible, the whole algorithm terminates; otherwise
a new iteration starts.

Consider the following example with $m = 5$ SA values, $s=3$
QI-groups, $l = 3$, and $Q_1 = (3,1,1,2,3), Q_2 = (0, 2, 2, 4, 4),
Q_3 = (4, 4, 0, 0, 0)$ before phase one.  (For notational simplicity
we use the vector presentation for multisets.  For instance,
(3,1,1,2,3) means there are three tuples with SA value 1, one tuple
with SA value 2 and 3 respectively, two and three tuples with SA
values 4 and 5 respectively.) In phase one, $Q_1$ and $Q_2$ do not
change, while all tuples of $Q_3$ are removed. Thus at the end of
phase one the status is $Q_1 = (3,1,1,2,3), Q_2 = (0, 2, 2, 4, 4), R
= (4, 4, 0, 0, 0)$. Now phase two starts.
In the first iteration, there are five alive SA values: 1, 2, 3, 4, and 5.
Suppose we pick
$v=3$. Note that both $Q_1$ and $Q_2$ can give a 3 to $R$, and the
choice can be arbitrary.  Say we remove a 3 from $Q_1$, changing the
status to $Q_1 = (3,1,0,2,3), Q_2 = (0, 2, 2, 4, 4), R = (4, 4, 1,
0, 0)$. Now $Q_1$ is dead, since it is both thin and conflicting
(the conflicting pillar is 1). In the second iteration, still 3, 4,
5 are all alive SA values, but since 4 and 5 have the minimum
$h(R,v)$, we pick one of them arbitrarily, say 4.  $Q_2$ now is the
only alive QI-group: it is thin but non-conflicting.  So we remove a
4 and a 5 together, changing the status to $Q_1 = (3,1,0,2,3), Q_2 =
(0, 2, 2, 3, 3), R = (4, 4, 1, 1, 1)$.  In the third iteration, 3,
4, 5 are all possible choices, and $Q_2$ is fat.  Say we remove a 3,
which results in $Q_1 = (3,1,0,2,3), Q_2 = (0, 2, 1, 3, 3), R = (4,
4, 2, 1, 1)$.  At this point $R$ has become $l$-eligible, and the
algorithm terminates.

\extraspacing \noindent {\bf Analysis } Let $\Qt_1, \dots, \Qt_s, \Rt$ be the status at the
end of phase two.  We first prove that $h(R$) does not increase in
this phase, that is:

\begin{lemma} \label{lmm:rt=ro}
$h(\Rt) = h(\Ro)$.
\end{lemma}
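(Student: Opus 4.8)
The plan is to establish the stronger invariant that throughout phase two, $h(R, v) \le h(\Ro)$ for every SA value $v$, proved by induction on the iterations. Because tuples are only moved into $R$ and never taken out, $h(R,v)$ is nondecreasing, so $R \supseteq \Ro$ already forces $h(R) \ge h(\Ro)$; the invariant supplies the matching upper bound, and together they give $h(R) = h(\Ro)$ at every iteration boundary, in particular $h(\Rt) = h(\Ro)$. The base case is immediate since $R = \Ro$ initially, so all the work lies in the inductive step: assuming $h(R,v) \le h(\Ro)$ for all $v$ at the start of an iteration, I would show that none of the tuples added during that iteration raises a count above $h(\Ro)$.

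I would split the inductive step according to whether the QI-group $Q$ processed in the iteration is thin or fat, recording first two easy observations: $Q$ is always alive (we only draw tuples from alive QI-groups), and $h(\Ro) \ge 1$ (otherwise $\Ro$ is empty, hence trivially $l$-eligible, and the algorithm would have stopped in phase one). The thin case is straightforward: an alive thin group is non-conflicting by definition, so no pillar of $Q$ is a pillar of $R$; hence each pillar value $w$ of $Q$ has $h(R,w) < h(R) = h(\Ro)$ before removal, and adding one tuple of each such $w$ leaves every count at most $h(\Ro)$.

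The fat case is where the real difficulty lies, and I would argue it by contradiction. Here exactly one tuple is added, carrying the chosen value $v$ (the least frequent alive value in $R$), so it suffices to rule out $h(R,v) = h(\Ro)$. If that equality held, then by minimality of $v$ together with the induction hypothesis every alive SA value would have count exactly $h(\Ro)$ in $R$, i.e., every alive value would be a pillar of $R$. Since $R$ is not yet $l$-eligible, $|R| < l \cdot h(\Ro)$, so $R$ has at most $l-1$ pillars and therefore at most $l-1$ distinct alive values. But $Q$ is fat, alive, and nonempty, so every value occurring in it is alive; this would force $Q$ to contain at most $l-1$ distinct SA values and thus $|Q| \le (l-1)\,h(Q) < l \cdot h(Q)$, contradicting the $l$-eligibility that phase two maintains on every QI-group. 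The crux of the entire proof is precisely this counting step, which converts ``the minimum alive count already equals the pillar height'' into a violation of $l$-eligibility in an alive group; once it is in place, both cases close and the invariant --- hence the lemma --- follows.
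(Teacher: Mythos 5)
Your proof is correct and rests on the same key observation as the paper's: since an alive $l$-eligible QI-group contributes at least $l$ alive SA values while a non-$l$-eligible $R$ has at most $l-1$ pillars, the least frequent alive SA value can never be a pillar of $R$, and the thin case is covered by non-conflictingness. You merely package this as an explicit inductive invariant ($h(R,v)\le h(\Ro)$ for all $v$) and treat the thin case more explicitly than the paper does, but the argument is essentially identical.
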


In general, $\Rt$ may not be $l$-eligible.  However, if it is, then
the following guarantee holds.

\begin{lemma}
\label{lem:phasetwo} If the algorithm terminates during phase two,
then $|\Rt| \le l \cdot h(\Ro) + l -1$.
\end{lemma}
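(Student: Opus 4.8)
The plan is to control the overshoot of $|R|$ at the moment it first satisfies inequality~(\ref{eq:1}). I would begin by noting that $h(R)$ stays constant at the value $h(\Ro)$ throughout phase two: since tuples are only added to $R$, the quantity $h(R) = \max_v h(R,v)$ is non-decreasing, and Lemma~\ref{lmm:rt=ro} pins its final value at $h(\Ro)$, so it cannot have risen above $h(\Ro)$ at any intermediate step. Consequently the termination test $|R| \ge l \cdot h(R)$ reads simply $|R| \ge l \cdot h(\Ro)$.

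Next I would examine the configuration immediately before the terminating iteration. Because the algorithm had not stopped earlier, $R$ was still not $l$-eligible there, which gives $|R| \le l \cdot h(\Ro) - 1$; the same bound holds trivially with $R = \Ro$ if phase two terminates in its very first iteration, since entering phase two presupposes that $\Ro$ fails~(\ref{eq:1}). It then remains only to bound the number of tuples a single iteration can push into $R$.

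The crux is to show this per-iteration increase is at most $l$. In the fat case the algorithm removes a single tuple, so $|R|$ grows by $1$. In the thin case it removes one tuple from each pillar of the chosen group $Q$, and here I would argue that a thin $Q$ has at most $l$ pillars: thinness means $|Q| = l \cdot h(Q)$, so if $Q$ had $p$ pillars these would already account for $p \cdot h(Q)$ tuples, forcing $p \cdot h(Q) \le l \cdot h(Q)$ and hence $p \le l$ (using $h(Q) \ge 1$ since the non-empty $Q$ is being touched). Either way $|R|$ increases by at most $l$ in one iteration.

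Putting the pieces together yields $|\Rt| \le (l \cdot h(\Ro) - 1) + l = l \cdot h(\Ro) + l - 1$, exactly the claimed inequality. I expect the only nontrivial point to be the pillar count for thin groups; everything else is a short accounting built on Lemma~\ref{lmm:rt=ro} and the fact that the eligibility check is performed at the end of each iteration.
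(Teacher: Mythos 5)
Your proof is correct and follows essentially the same route as the paper's: use Lemma~\ref{lmm:rt=ro} to fix the termination threshold at $l \cdot h(\Ro)$, observe that each iteration moves at most $l$ tuples because a thin $l$-eligible group has at most $l$ pillars, and conclude the overshoot is at most $l-1$. Your added justification that a thin group has at most $l$ pillars (via $p \cdot h(Q) \le |Q| = l \cdot h(Q)$) just makes explicit a step the paper states without proof.
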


By combining Corollary~\ref{cor:optlower} and Lemma~\ref{lem:phasetwo}, we have the following result.

\begin{corollary} \label{cor:twophase-lower}
If the algorithm terminates during phase two, then it returns a
solution such that $|\Rt| \le OPT + l -1$.
\end{corollary}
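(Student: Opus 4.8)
The plan is simply to chain together the two estimates that have already been proved. Lemma~\ref{lem:phasetwo} supplies an absolute upper bound on the residue set at termination, $|\Rt| \le l \cdot h(\Ro) + l - 1$, expressed entirely in terms of the phase-one pillar height $h(\Ro)$. Corollary~\ref{cor:optlower} supplies a matching lower bound on the optimum written in terms of the same quantity, namely $OPT \ge l \cdot h(\Ro)$. Because both bounds are anchored at the common term $l \cdot h(\Ro)$, substituting the second into the first should immediately deliver the claim.

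Concretely, I would first observe that the hypothesis of the corollary, termination during phase two, is exactly the hypothesis required by Lemma~\ref{lem:phasetwo}, so that lemma applies and bounds $|\Rt|$ from above. I would then invoke Corollary~\ref{cor:optlower}, which holds unconditionally, to replace $l \cdot h(\Ro)$ by the larger quantity $OPT$, yielding $|\Rt| \le l \cdot h(\Ro) + l - 1 \le OPT + l - 1$, the desired additive approximation guarantee. There is no genuine obstacle at this stage: all of the real work is carried out in Lemma~\ref{lem:phasetwo}, which controls how many extra tuples phase two can accumulate beyond $l \cdot h(\Ro)$, and in Corollary~\ref{cor:optlower}, which relates the pillar height to the optimum. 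The corollary itself is merely the one-line arithmetic that bridges them, so I would expect the proof to be a direct transitive combination of the two inequalities rather than anything requiring new ideas.
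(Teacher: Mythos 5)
Your proposal is correct and matches the paper exactly: the paper derives this corollary by the same one-line combination of Lemma~\ref{lem:phasetwo} ($|\Rt| \le l \cdot h(\Ro) + l - 1$) with Corollary~\ref{cor:optlower} ($OPT \ge l \cdot h(\Ro)$).
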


It is clear that all QI-groups are dead after phase two (unless the
algorithm terminates).  In this case, the following property holds,
which will be useful later on.

\begin{lemma}
  \label{lem:Rt}
  If $Q_1, \dots, Q_s$ are all dead and $R$ is not $l$-eligible, then for
  any pillar $p$ of $R$, there exists some $Q_i$ such that $p$ is not a
  conflicting pillar of $Q_i$.
\end{lemma}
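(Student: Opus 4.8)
The plan is to argue by contradiction. Fix an arbitrary pillar $p$ of $R$, and suppose instead that $p$ is a conflicting pillar of \emph{every} $Q_i$. Since $p$ is by assumption a pillar of $R$, the definition of ``conflicting pillar'' makes this assumption equivalent to saying that $p$ is a pillar of each $Q_i$, i.e.\ $h(Q_i,p)=h(Q_i)$ for all $i$. I will show this forces $R$ to be $l$-eligible, contradicting the hypothesis that it is not, and hence conclude that some $Q_i$ must fail to have $p$ as a (conflicting) pillar.

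The heart of the argument is a conservation-of-count observation. Because tuples are only ever moved out of the QI-groups into $R$ and never removed from the dataset, at every stage $\mathcal{T}$ is the disjoint (multiset) union of $Q_1,\dots,Q_s$ and $R$. Consequently the total number of tuples of $\mathcal{T}$ carrying the SA value $p$ equals $\sum_i h(Q_i,p)+h(R,p)$. Since $p$ is a pillar of $R$ we have $h(R,p)=h(R)$, and under the contradiction hypothesis $h(Q_i,p)=h(Q_i)$, so this count equals $\sum_i h(Q_i)+h(R)$. On the other hand, each $Q_i$ is dead and therefore thin, so $|Q_i|=l\cdot h(Q_i)$, whence $|\mathcal{T}|=\sum_i |Q_i|+|R| = l\sum_i h(Q_i)+|R|$. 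Now I invoke the standing assumption that $\mathcal{T}$ is itself $l$-eligible (the precondition guaranteeing that Problem~\ref{prob:prob-star-min} is solvable): no SA value occurs in more than $|\mathcal{T}|/l$ tuples, so in particular the count of $p$-tuples is at most $|\mathcal{T}|/l$. Substituting the two displayed expressions gives $\sum_i h(Q_i)+h(R)\le \sum_i h(Q_i)+|R|/l$; the $\sum_i h(Q_i)$ terms cancel, leaving $h(R)\le |R|/l$, i.e.\ $|R|\ge l\cdot h(R)$, which is exactly the statement that $R$ is $l$-eligible --- the desired contradiction.

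I expect the only real subtlety to be the bookkeeping: recognizing that the right quantity to track is the \emph{global} number of $p$-tuples across all groups and $R$, and that it is the thinness supplied by the word ``dead'' (not the conflicting half of that definition) which makes the $\sum_i h(Q_i)$ contributions cancel cleanly. The conflicting property of the $Q_i$ plays no role in this particular implication. A minor loose end to dispatch is the treatment of empty QI-groups, for which $p$ is trivially not a pillar; such groups contribute zero to every sum and so never interfere with the contradiction hypothesis.
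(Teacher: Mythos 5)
Your proof is correct and is essentially the paper's argument: both assume $p$ is a conflicting pillar of every $Q_i$, use thinness to get $|Q_i| = l\cdot h(Q_i,p)$, sum over all groups together with $R$, and derive a contradiction with the $l$-eligibility of $\mathcal{T}$. The only difference is cosmetic --- the paper concludes $n < l\cdot h(\mathcal{T},p)$ and contradicts $\mathcal{T}$'s eligibility, while you rearrange the same inequality to contradict $R$'s non-eligibility.
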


The following corollary follows from Lemma~\ref{lem:Rt}.

\begin{corollary}
\label{cor:Rt} If the algorithm does not terminate after phase two,
then $\Rt$ has at least two pillars.
\end{corollary}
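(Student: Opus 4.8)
The plan is to derive this directly from Lemma~\ref{lem:Rt} by a short contradiction argument, after first checking that the two hypotheses of that lemma are in force. First I would observe that the phrase ``the algorithm does not terminate after phase two'' means precisely that, at the end of phase two, $\Rt$ fails to be $l$-eligible; and as already noted in the text, whenever the algorithm does not terminate, every QI-group is dead. Hence both premises of Lemma~\ref{lem:Rt}---that $Q_1,\dots,Q_s$ are all dead and that $R=\Rt$ is not $l$-eligible---are satisfied, so the lemma may be applied to any pillar of $\Rt$.

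Next I would note that $\Rt$ is nonempty, since an empty multiset is vacuously $l$-eligible and $\Rt$ is not; consequently $\Rt$ has at least one pillar, and it suffices to rule out the case of a single pillar. So suppose, toward a contradiction, that $\Rt$ has exactly one pillar $p$. Applying Lemma~\ref{lem:Rt} to this pillar $p$ yields a QI-group $Q_i$ for which $p$ is \emph{not} a conflicting pillar. Since $p$ is by assumption a pillar of $R$, the only way $p$ can fail to be a conflicting pillar of $Q_i$ is for $p$ not to be a pillar of $Q_i$ at all, because a conflicting pillar of $Q_i$ is exactly a value that is simultaneously a pillar of $Q_i$ and a pillar of $R$.

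Then I would invoke the fact that $Q_i$ is dead, which by definition means $Q_i$ is both thin and conflicting. Being conflicting, $Q_i$ possesses at least one conflicting pillar $q$, i.e., a value that is a pillar of both $Q_i$ and $R$. But $R=\Rt$ was assumed to have the single pillar $p$, so necessarily $q=p$; in particular $p$ is a pillar of $Q_i$. This contradicts the conclusion of the previous paragraph that $p$ is not a pillar of $Q_i$. The contradiction shows that the single-pillar assumption is untenable, so $\Rt$ must have at least two pillars.

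I do not expect a genuine obstacle here, as the corollary is essentially an unpacking of Lemma~\ref{lem:Rt}. The only point requiring care is the definitional bookkeeping: correctly reading ``$p$ is not a conflicting pillar of $Q_i$'' as ``$p$ is not a pillar of $Q_i$'' (valid precisely because $p$ is already known to be a pillar of $R$), and recognizing that a dead group's conflicting pillar, being forced to lie among the pillars of $R$, must coincide with $p$ once $R$ is assumed to have only one pillar.
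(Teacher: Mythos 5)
Your proof is correct and follows essentially the same route as the paper's: both derive a contradiction from Lemma~\ref{lem:Rt} by observing that if $\Rt$ had a single pillar, every (dead, hence conflicting) QI-group would be forced to conflict on that pillar. Your version merely spells out the definitional bookkeeping that the paper's one-line proof leaves implicit.
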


The result above implies that for the special case $l=2$, the
algorithm always terminates during the first two phases.

\begin{theorem}
\label{thm:l2} For $l=2$, our algorithm always solves the tuple
minimization problem during the first two phases with a solution
$|\Rt| \le OPT +1$.
\end{theorem}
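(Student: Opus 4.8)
The plan is to argue by contradiction that for $l=2$ the algorithm can never reach phase three, and then to read off the error bound from the two-phase guarantee already established. The key observation is that having two pillars is incompatible with failing $2$-eligibility, so the hypothesis of Corollary~\ref{cor:Rt} cannot coexist with the condition that triggers phase three.

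First I would suppose, for contradiction, that the algorithm proceeds past phase two into phase three. By definition this means that at the end of phase two the residue set $\Rt$ is not $l$-eligible. Since all QI-groups are dead at this point, Corollary~\ref{cor:Rt} applies and tells us that $\Rt$ has at least two pillars. Let $p_1$ and $p_2$ be two distinct such pillars; each has height $h(\Rt)$ by definition, so $|\Rt| \ge h(\Rt, p_1) + h(\Rt, p_2) = 2\, h(\Rt)$. For $l = 2$ this is exactly the condition $|\Rt| \ge l \cdot h(\Rt)$, i.e.\ $\Rt$ \emph{is} $l$-eligible, contradicting the assumption that the algorithm did not terminate. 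Hence for $l=2$ the algorithm always halts during phase one or phase two.

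It then remains to verify the claimed bound $|\Rt| \le OPT + 1$ in each terminating case. If termination occurs in phase one, Corollary~\ref{cor:phaseone} gives an optimal solution, so the number of removed tuples equals $OPT \le OPT + 1$. If termination occurs in phase two, Corollary~\ref{cor:twophase-lower} gives $|\Rt| \le OPT + l - 1$, which for $l = 2$ is precisely $|\Rt| \le OPT + 1$. Combining the two cases yields the stated bound.

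The only subtle point is recognizing that the two-pillar conclusion of Corollary~\ref{cor:Rt} directly certifies $2$-eligibility: two pillars contribute $2\,h(\Rt)$ tuples, which is exactly the $l$-eligibility threshold when $l=2$. I expect no genuine obstacle here; the argument is a short deduction, and essentially all the work has been front-loaded into Corollaries~\ref{cor:Rt} and~\ref{cor:twophase-lower}.
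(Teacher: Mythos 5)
Your proposal is correct and follows essentially the same route as the paper: use Corollary~\ref{cor:Rt} to rule out phase three (since two pillars force $|\Rt| \ge 2h(\Rt)$, i.e.\ $2$-eligibility), then invoke Corollary~\ref{cor:phaseone} or Corollary~\ref{cor:twophase-lower} depending on where the algorithm stops. You merely spell out the two-pillar implication that the paper states in one clause.
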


\subsection{Phase Three} \label{sec:algorithm-3}
In most cases the algorithm will stop in the first two phases.
However, on some ``hard'' inputs we will have to resort to phase
three.  In the output from phase two, all QI-groups are thin and
conflicting, and still $|R| < l \cdot h(R)$.  The failure of phase
two suggests that in order to satisfy Inequality~(\ref{eq:1}), we
cannot just increase $|R|$.  We need to increase both $|R|$ and
$h(R)$, but in a careful way such that the amount of increase in
$|R|$ is more than $l$ times the increase in $h(R)$, so that
eventually the gap between $|R|$ and $l \cdot h(R)$ can be closed.

\extraspacing \noindent {\bf Algorithm } The third phase proceeds in rounds, each consisting
of two steps.  In the first step, we pick a subset $S$ of QI-groups,
and remove one tuple from each of their pillars.  This increases
$h(R)$ but also (possibly) makes these QI-groups fat. Meanwhile,
since certain pillars of $R$ might have disappeared after this step,
some other QI-groups may switch from conflicting to non-conflicting.
More precisely, a greedy algorithm is used to decide $S$. Initially
we set $P$ to be the set of pillars of $R$. For a QI-group $Q$, let
$C(Q)$ be the set of conflicting pillars of $Q$. The greedy
algorithm iteratively does the following. As long as $P$ is not
empty, pick the QI-group $Q$ that minimizes $|C(Q) \cap P|$, and
then set $P \leftarrow P \cap C(Q)$. Note that here the problem is
equivalent to {\sc
  set cover} \cite{clr01}, i.e., we are using $\overline{C(Q)}$ as the ``sets''
to cover all the pillars of $R$, and this greedy algorithm is
actually the same as the standard heuristic for {\sc set cover}.

In the second step, for each QI-group $Q$, if it has become alive
after step one, then we keep removing tuples from $Q$ until it
becomes dead again, using the following simple rules: If $Q$ is fat,
remove a tuple from any SA value that is not a pillar of $R$. If $Q$
is thin, then check if it is conflicting.  If yes, then we are done
with this QI-group; otherwise we remove a tuple from each of its
pillars.  If at any time $R$ becomes $l$-eligible, the whole
algorithm terminates.  Note that if the algorithm does not terminate
after a round, all QI-groups have become dead again.

The following is an example showing how phase three works.  Suppose
$m=5, s=2, l=4$, and the status after phase two is $Q_1 =
(3,1,2,3,3), Q_2 = (1,3,2,3,3), R=(4,4,4,0,0)$. Note that $Q_1$ and
$Q_2$ are both thin and conflicting: $Q_1$ conflicts on 1 while
$Q_2$ conflicts on 2. In step one of the first round, we pick
QI-groups whose $\overline{C(Q)}$ together cover the pillars
$\{1,2,3\}$ of $R$. As $\overline{C(Q_1)} = \{2, 3, 4, 5\}$ and
$\overline{C(Q_2)} = \{1, 3, 4, 5\}$, the greedy algorithm chooses
both $Q_1$ and $Q_2$.  Then we remove one tuple from each of the
pillars of $Q_1$ and $Q_2$, resulting in the following
configuration: $Q_1 = (2,1,2,2,2), Q_2 = (1,2,2,2,2),
R=(5,5,4,2,2)$.  In step two, we first remove tuples from $Q_1$
until it becomes dead. As $Q_1$ is fat and SA values
3, 4, 5 are not a pillar of $R$, we can remove any tuple of those SA
values from $Q_1$. Suppose a 3 is removed, resulting in $Q_1 =
(2,1,1,2,2), Q_2 = (1,2,2,2,2), R=(5,5,5,2,2)$. Similarly we remove
a 4 from $Q_2$, leading to $Q_1 = (2,1,1,2,2), Q_2 = (1,2,2,1,2),
R=(5,5,5,3,2)$. At this point, $R$ is $l$-eligible and the algorithm
terminates.  In this simple example, there is only one round, but in
general there could be multiple rounds.

\extraspacing \noindent {\bf Analysis } We now analyze the approximation ratio guaranteed by
the algorithm. As it turns out, the key factor is to bound the increase in $h(R)$
throughout phase three.

\begin{lemma}
\label{lem:round1}
  In each round of phase three, $h(R)$ increases by at most $l-2$.
\end{lemma}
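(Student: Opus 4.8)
The plan is to observe that the entire increase of $h(R)$ during a round comes from step one, and then to bound that increase using the combinatorial structure of the greedy set-cover selection together with the fact that $R$ has very few pillars.

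First I would verify that step two cannot raise $h(R)$ at all. Every tuple moved into $R$ during step two carries an SA value that is not a current pillar of $R$: for a fat $Q$ we explicitly remove from a non-pillar of $R$, and for a thin $Q$ (which must then be non-conflicting) we remove from $Q$'s own pillars, none of which is a pillar of $R$ by non-conflict. Adding a single tuple of a non-pillar value $v$ keeps $h(R,v)\le h(R)$, so $h(R)$ is unchanged throughout step two. Hence it suffices to bound the increase produced by step one.

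Next I would analyze step one. Let $Q^{(1)},\dots,Q^{(k)}$ be the groups selected by the greedy, and let $P_0$ be the set of pillars of $R$. Since each $C(Q)\subseteq P_0$ and the greedy maintains $P_j=P_{j-1}\cap C(Q^{(j)})$ terminating with $P_k=\emptyset$, we obtain $\bigcap_{i=1}^{k} C(Q^{(i)})=\emptyset$. Removing one tuple from each pillar of each chosen group increases $h(R,p)$, for an old pillar $p$, by exactly $c_p=|\{i: p\in C(Q^{(i)})\}|$; the empty common intersection forces $p\notin C(Q^{(i)})$ for at least one $i$, so $c_p\le k-1$. For a value $v$ that was not a pillar of $R$, its old height is at most $h(R)-1$ and it gains at most $k$, so its new height is again at most $h(R)+k-1$. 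Therefore the new pillar height is at most $h(R)+k-1$, i.e.\ the increase is at most $k-1$.

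It then remains to bound $k$. Here I would invoke Lemma~\ref{lem:Rt}: at the start of a round all QI-groups are dead and $R$ is not $l$-eligible, so for every pillar still in $P$ some group leaves it outside its conflicting set, which guarantees both feasibility of the cover and that each greedy step strictly shrinks $P$; hence $k\le|P_0|$, the number of pillars of $R$. Finally, since $R$ is not $l$-eligible we have $|R|<l\cdot h(R)$, while $|R|$ is at least (number of pillars)$\,\times h(R)$, so $R$ has at most $l-1$ pillars and $k\le l-1$. Combining gives an increase of at most $k-1\le l-2$. The main obstacle is the step-one count: one must correctly see that the greedy selection yields $\bigcap_i C(Q^{(i)})=\emptyset$ (so no old pillar is hit by all chosen groups) and that $k$ is controlled by the number of pillars of $R$; once these two facts are secured, the $l-2$ bound is immediate.
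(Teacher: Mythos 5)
Your proof is correct and follows essentially the same route as the paper's: bound the number of greedily chosen QI-groups by the number of pillars of $R$ (at most $l-1$), note that the empty intersection of the $C(Q^{(i)})$'s means every old pillar is missed by at least one chosen group, and handle non-pillar values via $h(R,v)\le h(R)-1$. Your explicit verification that step two never raises $h(R)$ is a small point the paper leaves implicit, but otherwise the two arguments coincide.
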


\begin{lemma}
  \label{lem:round2}
  There are at most $h(\Rt)$ rounds in phase three.
\end{lemma}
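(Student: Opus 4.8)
The plan is to track the \emph{deficit} $D = l\cdot h(R) - |R|$, which is strictly positive at the start of phase three (otherwise $R$ would already be $l$-eligible) and which the algorithm drives down to $\le 0$. Note first that phase three is reached only when $l \ge 3$: by Theorem~\ref{thm:l2} the case $l=2$ always stops during the first two phases, so $l-2 \ge 1$. I would then prove two facts: (i) the initial deficit satisfies $D \le (l-2)\,h(\Rt)$ at the beginning of phase three, and (ii) every complete, non-terminating round decreases $D$ by at least $l-2$. Combining them, after at most $(l-2)\,h(\Rt)/(l-2) = h(\Rt)$ rounds we must have $D \le 0$, i.e.\ the algorithm has terminated, which is exactly the claimed bound.

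Fact (i) is the easy part. At the end of phase two $h(R) = h(\Rt)$, so $D = l\,h(\Rt) - |\Rt|$. By Corollary~\ref{cor:Rt}, $\Rt$ has at least two pillars; two distinct SA values, each occurring $h(\Rt)$ times, force $|\Rt| \ge 2\,h(\Rt)$, whence $D \le l\,h(\Rt) - 2\,h(\Rt) = (l-2)\,h(\Rt)$. For fact (ii), fix a round and let $\Delta h$ be the increase of $h(R)$ produced in step one, $k_1$ the number of tuples moved into $R$ during step one, and $k_2$ the number moved in during step two. Since step two leaves $h(R)$ unchanged, the net change of the deficit over the round is $l\,\Delta h - (k_1+k_2)$, so (ii) is equivalent to the inequality $k_1 + k_2 \ge l\,\Delta h + (l-2)$. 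The idea for establishing this is to couple the rise in $h(R)$ to the tuples deposited in $R$: step one removes one tuple from every pillar of each chosen group, contributing the $k_1$ removed tuples and simultaneously turning each such thin group $Q$ fat by exactly $l - \kappa_Q$ (where $\kappa_Q$ is its number of pillars, so that $|Q| = l\,h(Q) - \kappa_Q$ against the lowered threshold $l\,(h(Q)-1)$); step two then sheds tuples from every revived group until it is dead again, adding $k_2$ more tuples to $R$. Meanwhile $\Delta h$ is precisely the quantity kept small by the greedy \textsc{set cover} choice of $S$ over the pillars of $R$ (this is the content of Lemma~\ref{lem:round1}), so the height increase we pay for in $l\,\Delta h$ is dominated by the removals it forces.

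The main obstacle is exactly this coupling in (ii): lower-bounding, in terms of $\Delta h$, the number of tuples that steps one and two are \emph{forced} to move into $R$. The difficulty is that step one raises the $l$-eligibility threshold $l\,h(R)$ — which works against us — while the compensating payoff is indirect, arriving only through the groups that step one fattens and that step two must then re-deaden. Making the re-deadening yield $k_2$ quantitative is delicate, because a revived group may become conflicting again while still fat (so it sheds fewer tuples than its full fatness) and because a pillar of $R$ can be demoted during step one, switching a group from conflicting to non-conflicting and thereby adding it to the set that step two must process. Handling these corner cases while still extracting the clean bound $k_1 + k_2 \ge l\,\Delta h + (l-2)$ per round is where the real care lies; once it is in place, the final count is just the division of the bound from (i) by the per-round decrease from (ii).
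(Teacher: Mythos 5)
Your overall skeleton (bound the initial gap $D=l\cdot h(R)-|R|$, show each round shrinks it by a fixed amount, divide) is the same as the paper's, and your fact (i) is fine --- indeed Corollary~\ref{cor:Rt} gives $|\Rt|\ge 2h(\Rt)$ and hence the slightly tighter initial bound $D\le (l-2)h(\Rt)$, where the paper settles for $D\le l\cdot h(\Rt)$. But the proof has a genuine gap: fact (ii), the per-round decrease, is never established. You state the target inequality $k_1+k_2\ge l\,\Delta h+(l-2)$, describe the coupling you would need, and then explicitly defer it as ``the main obstacle'' and ``where the real care lies.'' That is precisely the content of the lemma; without it you have no proof.

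The paper closes this step with a short accounting that you came close to but did not assemble. Let $r$ be the number of QI-groups the greedy picks in step one. By the same argument as in Lemma~\ref{lem:round1}, $h(R)$ increases by at most $r-1$ in the round, so the threshold $l\cdot h(R)$ rises by at most $l(r-1)$. On the other side, each picked group $Q$ enters the round thin with pillar height $h(Q)$, has its pillar height reduced to $h(Q)-1$ in step one, and must end the round dead, hence thin, with pillar height at most $h(Q)-1$; its final size is therefore at most $l(h(Q)-1)=|Q|-l$, so each picked group contributes at least $l$ tuples to $R$, giving $|R|$ an increase of at least $lr$. The gap thus drops by at least $lr-l(r-1)=l$ per round, and $l\cdot h(\Rt)/l=h(\Rt)$ bounds the number of rounds. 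Note also that your two ``corner cases'' are not real obstructions: a group that is fat is by definition alive whether or not it is conflicting (dead requires thin \emph{and} conflicting), so step two cannot stall on a fat-but-conflicting group; and groups that switch to non-conflicting during step one only shed \emph{additional} tuples, which helps the bound. Your weaker per-round target of $l-2$ is true (it is implied by the decrease of $l$), but since you prove neither, the argument as written does not go through.
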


Based on Lemmas \ref{lem:round1} and \ref{lem:round2}, we prove our main theorem as follows.

\begin{theorem}
  \label{thm:main}
  Our algorithm finds an $l$-approximate solution to the tuple minimization
  problem.
\end{theorem}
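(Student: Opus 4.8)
The plan is to bound the final size $|R|$ at termination of phase three against the lower bound $OPT \ge l \cdot h(\Ro)$ from Corollary~\ref{cor:optlower}. The key observation is that phase three only terminates once Inequality~(\ref{eq:1}) holds, i.e. $|R| \ge l \cdot h(R)$, but it does so \emph{just barely}: the algorithm stops the instant $R$ becomes $l$-eligible. So I would track how much $h(R)$ can grow across all rounds, and argue that the corresponding growth in $|R|$ stays controlled, so that the final $|R|$ does not exceed $l \cdot OPT$. The two quantitative ingredients are already in hand: by Lemma~\ref{lem:round1}, each round raises $h(R)$ by at most $l-2$, and by Lemma~\ref{lem:round2} there are at most $h(\Rt)$ rounds. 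Combining these bounds the total increase of $h(R)$ in phase three by $(l-2)\cdot h(\Rt)$, so that the terminal pillar height satisfies $h(R_{\mathrm{final}}) \le h(\Rt) + (l-2)h(\Rt) = (l-1)\,h(\Rt)$.

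First I would convert the bound on $h(R)$ into a bound on $|R|$. Since the algorithm halts \emph{as soon as} $R$ is $l$-eligible, at termination we have $|R| \le l \cdot h(R)$ up to the tuples added in the final partial step (phase three removes tuples in bounded batches — one per pillar in step one, then enough to restore deadness in step two — so the overshoot past the $l$-eligibility threshold is controlled). The clean way to phrase this is: at termination $|R_{\mathrm{final}}| \le l \cdot h(R_{\mathrm{final}})$ once eligibility is reached, and $h(R_{\mathrm{final}}) \le (l-1)\,h(\Rt) = (l-1)\,h(\Ro)$ by Lemma~\ref{lmm:rt=ro}. Chaining these gives $|R_{\mathrm{final}}| \le l(l-1)\,h(\Ro)$, which must then be reconciled against the lower bound $OPT \ge l \cdot h(\Ro)$ to extract a ratio. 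This is where the argument needs the tighter accounting: a naive combination yields ratio $l-1$ on the pillar side but I must verify it aggregates to the claimed factor $l$ on $|R|$ itself, not a worse quadratic term.

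The main obstacle, and the step I would spend the most care on, is precisely this last reconciliation: showing the bound collapses to a clean multiplicative factor of $l$ rather than something like $l(l-1)$. The resolution should come from observing that $|R|$ cannot overshoot $l \cdot h(R)$ by much at the moment of termination — indeed once Inequality~(\ref{eq:1}) first holds we stop, so $|R_{\mathrm{final}}|$ is at most $l \cdot h(R_{\mathrm{final}})$ plus the bounded batch of the final step, while $h(R_{\mathrm{final}})$ itself has grown only in lock-step with the $(l-2)$-per-round increments. I would therefore argue directly that the \emph{ratio} $|R|/(l\,h(\Ro))$ never exceeds $l$ by tracking, round by round, that every increment of $h(R)$ by some $\delta \le l-2$ is accompanied by a removal of at most $l\delta + O(l)$ tuples into $R$ — so $|R|$ grows at rate at most $l$ times the rate of $l \cdot h(R)$. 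Feeding the final value through $OPT \ge l\,h(\Ro)$ then yields $|R_{\mathrm{final}}| \le l \cdot OPT$, i.e. an $l$-approximation, completing the proof. The subtlety to check carefully is the bookkeeping of the partial final step, ensuring the additive slack it contributes is absorbed into the multiplicative factor rather than degrading it.
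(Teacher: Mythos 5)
Your proposal follows essentially the same route as the paper: bound $h(\hat R)\le h(\Rt)+(l-2)h(\Rt)=(l-1)h(\Rt)$ via Lemmas~\ref{lem:round1} and \ref{lem:round2}, bound the overshoot at termination by $|\hat R|\le l\cdot h(\hat R)+l-1$ (the paper justifies the additive $l-1$ by noting that step two of each round removes at most $l$ tuples at a time), use $h(\Rt)=h(\Ro)$ from Lemma~\ref{lmm:rt=ro}, and divide by the lower bound of Corollary~\ref{cor:optlower}. The one place you hesitate --- the ``reconciliation'' at the end --- is actually just arithmetic and needs neither tighter accounting nor the round-by-round rate-tracking you sketch as a fallback: since $OPT\ge l\cdot h(\Ro)$ already carries a factor of $l$, one gets $|\hat R|/OPT\le\bigl(l(l-1)h(\Ro)+l-1\bigr)/\bigl(l\cdot h(\Ro)\bigr)=(l-1)+\tfrac{l-1}{l\cdot h(\Ro)}<l$; there is no quadratic term to worry about, and in fact the ratio is really $l-1$ plus a vanishing correction.
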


\subsection{Implementation} \label{sec:implementation}

Our three-phase algorithm can be implemented efficiently using
inverted list structures.  In this subsection, we present an
implementation, which has a worst-case time complexity of $O(s \cdot
n)$.

\extraspacing \noindent {\bf The basic data structure } We maintain an array $\A_i$ for each
QI-group $Q_i$ throughout the algorithm, as well as an $\A_R$ for
the set of removed tuples $R$. Suppose that $Q_i$ has $n_i$ tuples,
for $i=1,\dots, s$.  The array $\A_i$ has $n_i$ entries.  The $j$-th
entry, $\A_i[j]$, contains a pointer to a list of SA values $v$ such
that $h(Q_i, v) = j$.  Note that some entries of $A_i$ may be empty.
Along with each SA value $v$, we keep a pointer to a list of tuples
in $Q_i$ with this SA value, called the {\em SA set} of $v$. For
each $\A_i$, we also maintain $p_i$, the maximum index $j$ such that
$\A_i[j]$ is nonempty.  In other words, $\A_i[p_i]$ always points to
the list of pillars of $Q_i$.  We similarly maintain the pillar
pointer $p_R$ for $\A_R$.  The whole data structure uses linear
space and can also be easily initialized in $O(n)$ time.

This data structure supports an update, i.e., moving a tuple from
some $Q_i$ to $R$ in constant time.  To move a tuple $t$, we first
remove it from its SA set, stored at some $\A_i[j]$.  If $j=1$, we
also delete the SA set; otherwise, we move the SA set from $\A_i[j]$
to $\A_i[j-1]$.  Next, we insert $t$ to $\A_R$, and the procedure is
symmetric.  Finally, we also update $p_i$ and $p_R$.  Note that
although $p_i$ may decrease a lot in a single update, the amortized
cost of maintaining $p_i$ is $O(1)$, since $p_i$ only moves in
one direction, and the total distance it travels is at most $n_i$.

\extraspacing \noindent {\bf Phase one } Consider the QI-group $Q_i$ with $n_i$ tuples.  In
phase one, we simply keep removing tuples from the pillar of $Q_i$,
i.e., the first SA set in the list pointed by $\A_i[p_i]$.  Since
the update cost for each removed tuple is $O(1)$, and we can also
easily check if $Q_i$ is $l$-eligible after each update, the running
time for this QI-group is $O(n_i)$, implying a total running time of
$O(n)$ for phase one.


\extraspacing \noindent {\bf Phase two } To efficiently implement our phase two algorithm,
another inverted list $\C$, called the {\em candidate list}, is
required.  It is an array of size $n$.  At $\C[j]$ we store a list
of entries of the form $(i, v)$, one for each alive SA value $v$ in
$Q_i$ if $h(R,v) = j$.  That is, the list at $\C[j]$ stores (the
pointers to) all possible SA sets from which we can remove tuples.
$\C$ can be initialized in $O(n)$ time.  It can also be maintained
with $O(1)$ cost after a tuple is inserted to $R$.

In each iteration of phase two, we pick a pair $(i,v)$ from the list
stored at the first non-empty entry $\C[j]$.  Next we check if $Q_i$
is fat.  If it is we simply remove a tuple from $Q_i$ with SA value
$v$; otherwise we remove a tuple from each of $Q_i$'s pillars.  At
the end of the iteration, we check if $Q_i$ is dead.  If so we
remove all its entries $(i,v)$ from $\C$.  Since the cost to remove
a tuple is $O(1)$, and there are at most $n$ entries in $\C$, the
total cost of phase two is $O(n)$.


\extraspacing \noindent {\bf Phase three } The first step of each round in phase three is
the standard greedy algorithm for {\sc set cover}, which can be
implemented in $O(s \cdot l)$ time \cite{clr01}, since there are $s$
sets and each set has cardinality at most $l$.  In the second step,
by using the inverted list $\A_i$, a QI-group $Q_i$ can be handled
in time $O(l + r)$, where $r$ is the number of tuples removed from
$Q_i$.  To see this, note that every time we apply the rule, we
either remove tuples, whose cost can be charged to the $O(r)$ term,
or declare that $Q_i$ is dead, whose cost is at most $O(l)$.  Since
we remove at most $n$ tuples in total, the overall cost of phase
three is thus $O(sl \cdot h(\Ro) + n)$ as there are at most $h(\Ro)$
rounds by Lemma~\ref{lem:round2}.  Finally, since $h(\Ro) \le n/l$,
we conclude that the total cost of phase three is $O(s \cdot l
\cdot n/l + n) = O(s \cdot n)$.  Note that this is a very
pessimistic bound, as the typically number of rounds is much smaller
than $n/l$ in practice.


\begin{theorem}
Our three-phase algorithm can be implemented in $O(s \cdot n)$ time.
\end{theorem}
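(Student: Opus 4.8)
The plan is to analyze the three phases separately on top of the inverted-list data structure and then add up the costs, which will telescope to $O(s \cdot n)$. The foundation is the claim that the structure supports a single tuple move from some $Q_i$ into $R$ in $O(1)$ amortized time. Most of the bookkeeping (splicing an SA set from $\A_i[j]$ down to $\A_i[j-1]$, and symmetrically inserting into $\A_R$) is worst-case $O(1)$; the only nontrivial point is maintaining the pillar indices $p_i$ and $p_R$. Here I would invoke a monotonicity-plus-amortization argument: because tuples only ever leave a QI-group, $p_i$ is non-increasing over the whole run and travels a total distance of at most $n_i$, while $p_R$ is non-decreasing and travels at most $n$. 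Charging each unit of pointer movement to the update that causes it yields $O(1)$ amortized cost per move and $O(n)$ over all moves combined.

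Next I would bound each phase. Phase one repeatedly removes pillar tuples from $Q_i$ until it is $l$-eligible; since each removal is an $O(1)$ update and the test $|Q_i| \ge l \cdot h(Q_i)$ is $O(1)$ via $p_i$, the work for $Q_i$ is $O(n_i)$, so $\sum_i O(n_i) = O(n)$. Phase two uses the candidate list $\C$, which is built in $O(n)$ time and maintained in $O(1)$ per insertion into $R$; each iteration locates a pair at the first nonempty $\C[j]$, performs a constant number of $O(1)$ removals, and possibly purges a dead QI-group's entries. The total is thus governed by the number of tuples moved, which is at most $n$, giving $O(n)$.

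The dominant term, and the part requiring the most care, is phase three. Within one round, step one is the greedy {\sc set cover} heuristic over $s$ sets each of cardinality at most $l$ (a thin QI-group has at most $l$ distinct pillars, so $|C(Q)| \le l$), implementable in $O(s \cdot l)$ per round as in \cite{clr01}; step two processes each revived QI-group in $O(l + r)$, where the $O(r)$ charges to the $r$ tuples actually removed and the $O(l)$ pays for the final dead-declaration. The crux is bounding the number of rounds: by Lemma~\ref{lem:round2} there are at most $h(\Ro)$ of them, and by Corollary~\ref{cor:optlower} together with $OPT \le n$ we have $h(\Ro) \le n/l$. Summing, the per-round $O(s \cdot l)$ step-one cost and the $O(s \cdot l)$ aggregate dead-declaration cost contribute $O(s \cdot l \cdot n/l) = O(s \cdot n)$, while the removal terms contribute $O(n)$ in total since at most $n$ tuples are ever moved. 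Hence phase three costs $O(s \cdot n)$.

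Adding the $O(n)$ bounds of the first two phases to the $O(s \cdot n)$ bound of phase three gives the claimed $O(s \cdot n)$ overall. I expect the main obstacle to be the round-counting step: the bound on the number of rounds is exactly where the combinatorial Lemma~\ref{lem:round2} (and, through Corollary~\ref{cor:optlower}, the estimate $h(\Ro) \le n/l$) is indispensable. Without it, the per-round set-cover cost would not telescope into the stated bound, and the careful amortized charging of step two against the removed tuples would not suffice on its own.
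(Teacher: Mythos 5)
Your proposal is correct and follows essentially the same route as the paper: the same inverted-list structure with amortized $O(1)$ pillar-pointer maintenance, the same $O(n)$ bounds for phases one and two via the candidate list, and the same per-round accounting for phase three ($O(s\cdot l)$ for the greedy set cover, $O(l+r)$ per revived QI-group) combined with Lemma~\ref{lem:round2} and $h(\Ro)\le n/l$ to get $O(s\cdot n)$. Your explicit derivation of $h(\Ro)\le n/l$ from Corollary~\ref{cor:optlower} and $OPT\le n$ is a small clarification the paper leaves implicit, but the argument is otherwise the same.
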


\subsection{Discussions} \label{sec:algorithm-discuss}

The performance of our algorithm is sensitive to the diversity of QI values in the microdata. If most tuples in the microdata have distinct QI values, the first phase of our algorithm would start with a large number of QI groups that contain less than $l$ tuples; eventually, all tuples in these QI groups will be moved to the set $R$ and suppressed, leading to a significant number of stars in the generalized data. Such degradation of data utility usually occurs when the microdata contains QI attributes with large domains. For example, a micordata table with {\em Birth Date}, {\em Gender}, and {\em ZIP Code} as the QI attributes would contain a significant number of tuples that have distinct QI values, since both {\em Birth Date} and {\em ZIP Code} have sizable domains, and hence, any two tuples are likely to differ on either attribute\footnote{Indeed, a recent study \cite{s02-b} has shown that $87\%$ of the U.S. population can be uniquely identified by their birth dates, genders, and 5-digit ZIP codes.}.

Despite the above drawback, our algorithm can still be useful in some scenarios, due to the following reasons. First, our algorithm can be applied on datasets with small or median QI domains. Such microdata exists, as many QI attributes in practice, such as {\em Gender}, {\em Race}, {\em Marital Status}, {\em Years of School Attendance}, have domains with cardinalities below $20$.

Second, QI attributes with large domains often need to be coarsened (even before generalization is performed) to avoid disclosure of excessively detailed personal information. For example, the {\em Standards for Privacy of Individually Identifiable Health Information} \cite{hipaa} (issued by the U.S. Department of Health and Human Services) requires that, unless otherwise justified, any personal data to be published should satisfy the following two conditions (in addition to numerous other requirements):
\begin{enumerate}
\item given any date directly related to an individual (e.g., birth date, admission date, discharge date), only the year of the date is released;

\item only the first three digits of any ZIP code are retained.
\end{enumerate}
Therefore, given a dataset with QI attributes {\em Birth Date} and {\em ZIP Code}, if the publisher is to release the data in a manner that conforms to the above standard, s/he should transform {\em Birth Date} to {\em Year of Birth}, and remove all but the initial three digits of any ZIP code. This considerably reduces the domain size of the attributes, making our algorithm applicable on the dataset.



Third, our algorithm can be easily combined with any heuristic suppression algorithm to improve its performance over datasets with diverse QI values. Specifically, given a micordata table, we can first employ our algorithm to obtain (i) a set of QI-groups that contain no stars, and (ii) the residue set $R$. After that, we can apply any existing heuristic algorithm on $R$ to divide it into smaller QI-groups, thus reducing the number of values that need to be suppressed. Apparently, such a hybrid approach always outperforms our algorithm in star minimization, and hence, it also achieves an approximation ratio of $O(l \cdot d)$.

Last but not least, given a microdata table, we may preprocess it with any single-dimensional generalization method to reduce the cardinalities of the QI domains, and then apply our algorithm on the modified dataset. The preprocessing step method does not need to ensure $l$-diversity: even the $k$-anonymity algorithms \cite{i02,ba05,fwy05,wyc04,ldr05} can be applied. The amount of generalization imposed in the preprocessing step has an effect on the quality of the $l$-diverse table output by our algorithm. In particular, less generalization leads to large domains of the QI attributes, which, in turn, results in more stars in the $l$-diverse table. On the other hand, when the QI attributes are coarsened to a higher degree during preprocessing, each non-star QI value in the $l$-diverse tale corresponds to a larger sub-domain of the QI attribute, i.e., the published QI values are less accurate. To achieve a good tradeoff between the number of stars and the accuracy of non-star QI values, we may vary the amount of generalization in the preprocessing step, examine the output of our algorithm, and choose the setting that optimizes the utility of the $l$-diverse table. A complete treatment of this issue, however, is beyond the scope of this paper.

\section{Experiments} \label{sec:exp}

This section experimentally evaluates the proposed techniques. Section~\ref{sec:exp-star} examines the performance of our algorithms in star minimization, and Section~\ref{sec:exp-kl} compares our algorithms with single-dimensional generalization methods. All of our experiments are performed on a computer with a 3 GHz Pentium IV CPU and 2 GB RAM.

\subsection{Star Minimization} \label{sec:exp-star}

\noindent {\bf Algorithms evaluated } The existing $l$-diversity techniques employ either single- or multi-dimensional generalization. We examine the state of the art \cite{fwy05,ldr06-a,gkkm07} of these techniques, modify them as suppression algorithms, and choose {\em Hilbert} \cite{gkkm07}, the one that achieves the best performance in star minimization, as the baseline with which our algorithms are compared. We denote the three phase algorithm in Section~\ref{sec:algorithm} as {\em TP}. We have also implemented a hybrid algorithm, {\em TP$^+$}, which combines both {\em Hilbert} and {\em TP}. Specifically, given a microdata $\mathcal{T}$, {\em TP$^+$} first invokes {\em TP} to produce a partition of $\mathcal{T}$, and then applies {\em Hilbert} on the residue set $R$ (produced by {\em TP}) to reduce the number of stars in the $l$-diverse table. As discussed in Section~\ref{sec:algorithm-discuss}, such a hybrid algorithm also returns an $O(l \cdot d)$ solution for the star minimization problem.

\extraspacing \noindent {\bf Datasets } Following \cite{xt06-b,gkkm07}, we experiment with two datasets, SAL and OCC, obtained from the {\em American Community Survey} \cite{rsa04}. Both SAL and OCC contain 600k tuples, each capturing the information about a U.S. adult. Specifically, SAL has a sensitive attribute {\em Income}, and 6 QI attributes {\em Age}, {\em Gender}, {\em Race}, {\em Marital Status}, {\em Birth Place}, {\em Education}, {\em Work Class}. OCC contains the same QI attributes as in SAL, but has a different sensitive attribute {\em Occupation}. Table~\ref{tbl:exp-attribute} illustrates the domain size of each attribute.

\begin{table*}[t]
\centering
\begin{small}
\begin{tabular}{|@{\hspace{3pt}}c@{\hspace{3pt}}|@{\hspace{3pt}}c@{\hspace{3pt}}|@{\hspace{3pt}}c@{\hspace{3pt}}|@{\hspace{3pt}}c@{\hspace{3pt}}|@{\hspace{3pt}}c@{\hspace{3pt}}|@{\hspace{3pt}}c@{\hspace{3pt}}|@{\hspace{3pt}}c@{\hspace{3pt}}|@{\hspace{3pt}}c@{\hspace{3pt}}|@{\hspace{3pt}}c@{\hspace{3pt}}|@{\hspace{3pt}}c@{\hspace{3pt}}|}
\hline  & {\em Age} & {\em Gender} & {\em Race} & {\em Marital Status} & {\em Birth Place} & {\em Education} & {\em Work Class} & {\em Income} & {\em Occupation} \\ \hline
Size & 79 & 2 & 9 & 6 & 56 & 17 & 9 & 50 & 50\\
\hline
\end{tabular}
\end{small}
\tblcapup
\caption{Attribute domain sizes} \label{tbl:exp-attribute}
\vspace{-1mm}
\end{table*}

Based on SAL, we generate $7$ sets of microdata, SAL-$1$, SAL-$2$, ..., SAL-$7$. Each table in SAL-$d$ ($1 \le d \le 7$) is a projection of SAL on {\em Income} and $d$ QI attributes. As SAL has $7$ quasi-identifers, totally there are ${7 \choose d}$ microdata tables in SAL-$d$. Similarly, we also construct $7$ sets of microdata OCC-$d$ ($1 \le d \le 7$) from OCC.

\extraspacing \noindent {\bf Quality of generalizations } In the first set of experiments, we investigate the effect of $l$ on the quality of the generalization produced by each technique. In particular, for any given $l$, we employ each algorithm to generate $l$-diverse versions of the microdata in SAL-$4$ (OCC-$4$). Then, the performance of an algorithm is gauged by the average number of stars, in the $l$-diverse generalization it generates for the ${7 \choose 4} = 35$ microdata tables in SAL-$4$ (OCC-$4$).

\begin{figure}[t]
\centering
\begin{tabular}{cc}
\hspace{-5mm}\includegraphics[height=32mm]{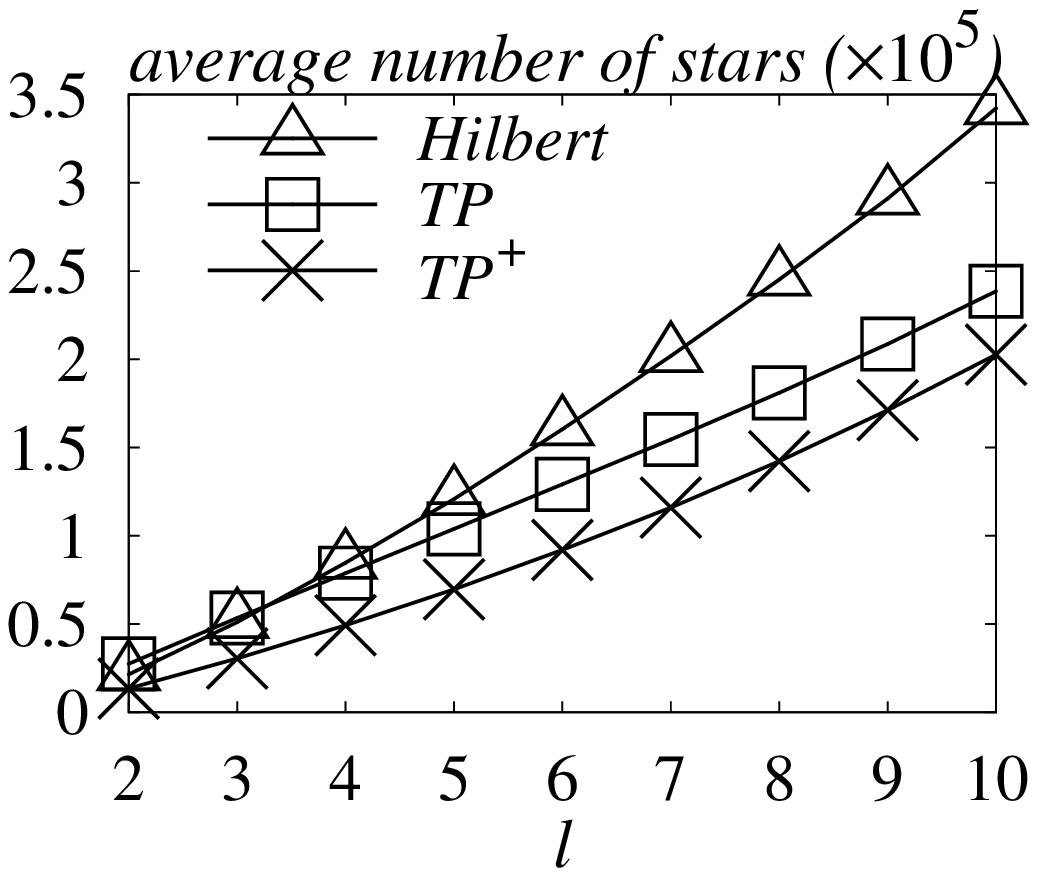} &
\hspace{-5mm}\includegraphics[height=32mm]{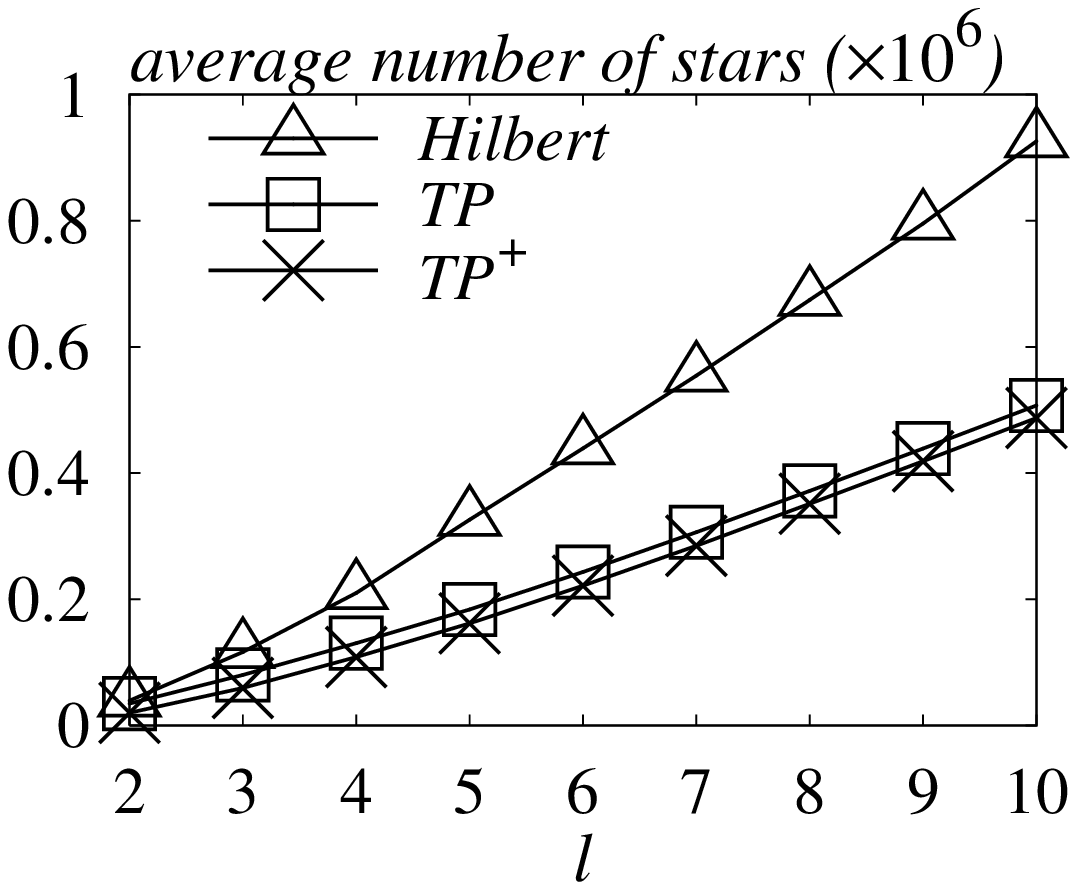} \\
\hspace{-0mm} (a) SAL-$4$ & \hspace{-5mm}(b) OCC-$4$
\end{tabular}
\figcapup \vspace{-2mm} \caption{Average number of stars vs.\ $l$}
\figcapdown
\label{fig:exp-stars-vs-l}
\end{figure}

Figure~\ref{fig:exp-stars-vs-l} illustrates the average number of stars as a function of $l$. All algorithms perform better when $l$ decreases, since a smaller $l$ leads to a lower degree of privacy protection, which can be achieved with less generalization. Both {\em TP} and {\em TP$^+$} consistently outperform {\em Hilbert}. In addition, {\em TP$^+$} incurs a smaller number of stars than {\em TP} in all cases.


\begin{figure}[t]
\centering
\begin{tabular}{cc}
\hspace{-5mm}\includegraphics[height=32mm]{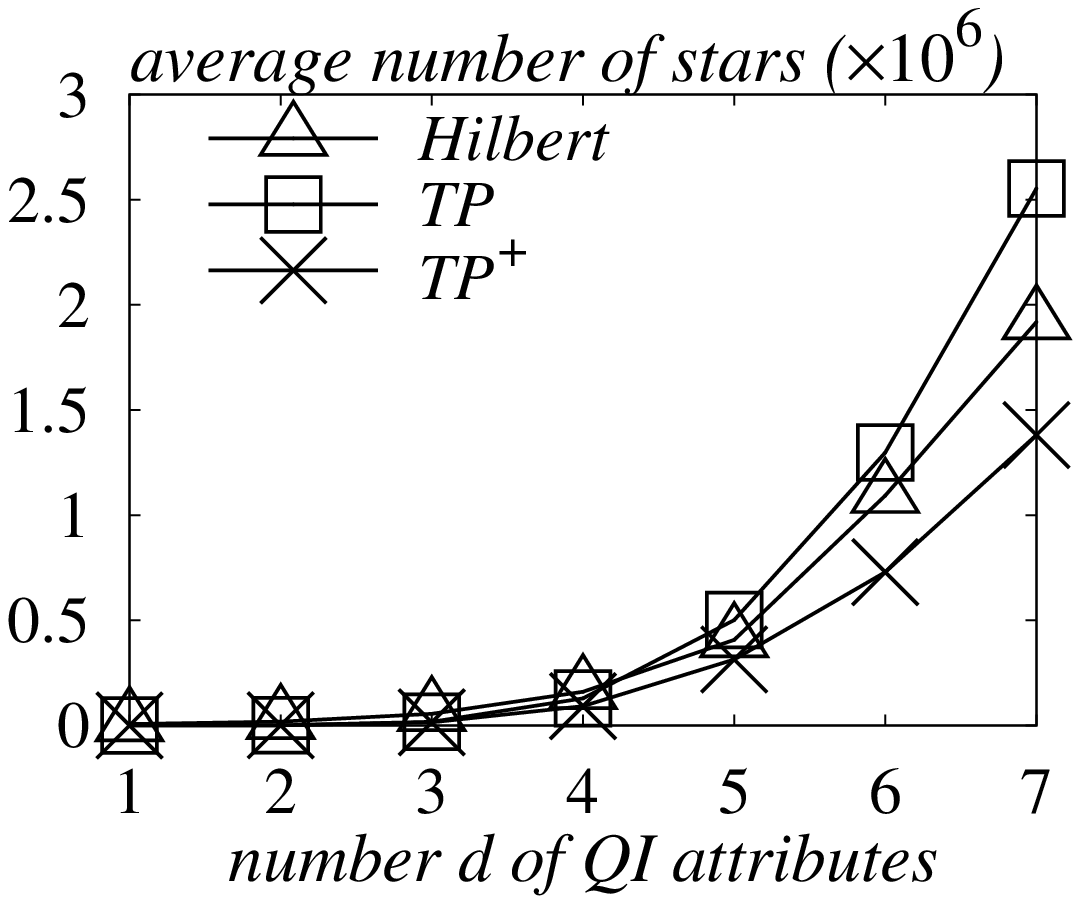} &
\hspace{-5mm}\includegraphics[height=32mm]{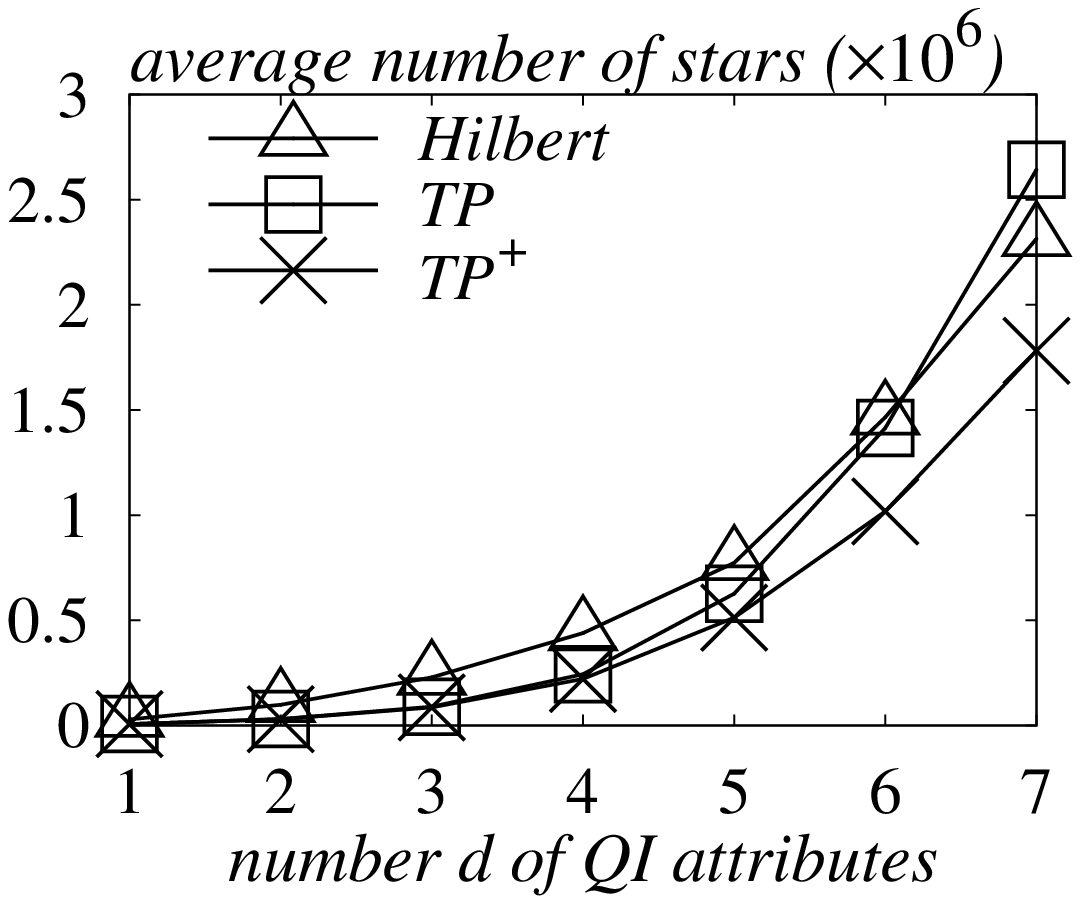} \\
\hspace{-0mm} (a) SAL-$d$ & \hspace{-5mm}(b) OCC-$d$
\end{tabular}
\figcapup \vspace{-2mm} \caption{Average number of stars vs.\ $d$ ($l = 6$)}
\label{fig:exp-stars-vs-d}
\end{figure}

Next, we examine the performance of each algorithm, fixing $l=6$ and varying the number $d$ of QI attributes in the microdata. Figure~\ref{fig:exp-stars-vs-d} shows the average number of stars incurred by each technique, for the tables in SAL-$d$ and OCC-$d$ ($1 \le d \le 7$). The average number of stars increases with $d$, which is consistent with the analysis in \cite{a05} that, all generalization techniques suffer from the curse of dimensionality. On SAL-$d$ (OCC-$d$), {\em TP} outperforms {\em Hilbert} when $d \le 4$ ($d \le 6$), but is inferior than {\em Hilbert} given a larger $d$. This is due to the fact that, as $d$ increases, the tuples in the microdata tend to have more diverse QI values, which, as discussed in Section~\ref{sec:algorithm-discuss}, renders {\em TP} less effective. {\em TP$^+$} overcomes this drawback by incorporating {\em Hilbert} to refine the residue set $R$, and hence, achieves better data utility than both {\em TP} and {\em Hilbert}.

\extraspacing \noindent {\bf Frequency of phase three execution } Recall that {\em TP} consists of three phases. For any positive integer $l$ and any microdata $\T$ with $d$ QI attributes, if {\em TP} terminates during the first or second phase, the number of stars in the returned generalization is at most $d \cdot (OPT + l - 1)$, where $OPT$ is the minimum number of stars in any $l$-diverse generalization of $\T$. In contrast, if {\em TP} terminates after phase three, the resulting generalization is an $(l \cdot d)$-approximation. Furthermore, the first two phases of {\em TP} have $O(n)$ time complexity, while the third phase runs in $O(s \cdot n)$ time in the worst case, where $s$ is the maximum number of tuples in $\T$ with distinct QI values. Therefore, {\em TP} performs much better in terms of both information loss and computation time, when it returns generalized tables without invoking phase three.

A natural question is, how often does {\em TP} execute the third phase? To answer this question, we apply {\em TP} on each microdata table in SAL-$d$ and OCC-$d$ ($1 \le d \le 7$) to compute its $l$-diverse ($2 \le l \le 10$) generalization, and examine whether {\em TP} invokes the third phase. It turns out that, on all 128 tables and for all 9 values of $l$, {\em TP} terminates before the third phase. In other words, in all our experiments, {\em TP} (and thus, {\em TP$^+$}) returns $O(d)$ solution to the star minimization problem.

\begin{figure}[t]
\centering
\begin{tabular}{cc}
\hspace{-5mm}\includegraphics[height=32mm]{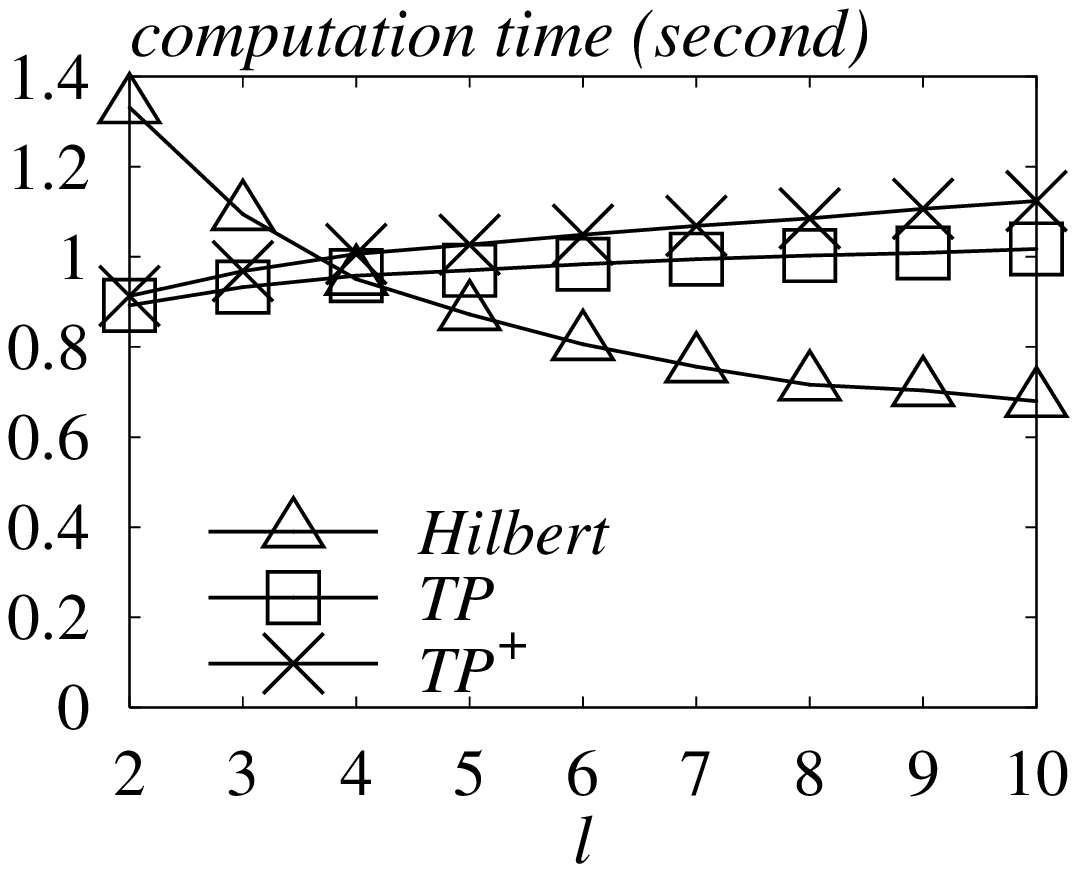} &
\hspace{-5mm}\includegraphics[height=32mm]{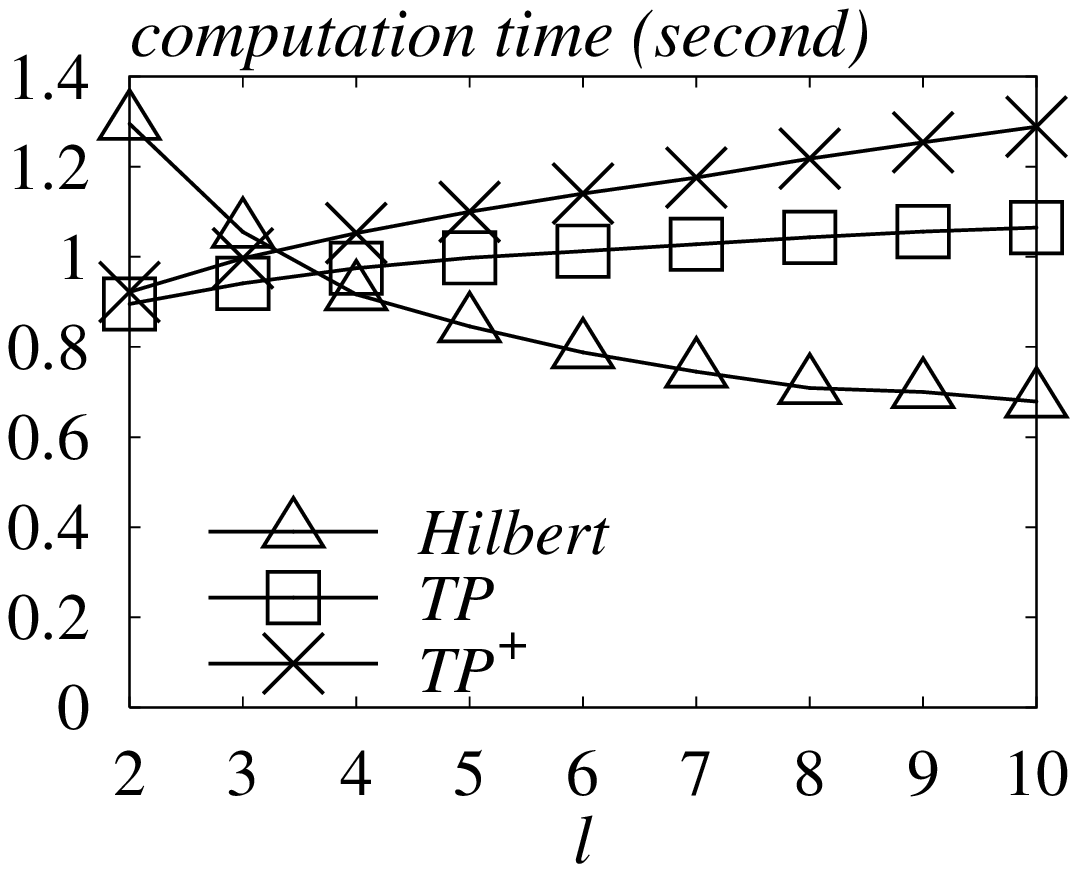} \\
\hspace{-0mm} (a) SAL-$4$ & \hspace{-5mm}(b) OCC-$4$
\end{tabular}
\vspace{-4mm} \caption{Computation time vs.\ $l$}
\vspace{-4mm} \label{fig:time-vs-l}
\end{figure}

\extraspacing \noindent {\bf Computation overhead } In the following experiments, we compare the efficiency of each algorithm. First, for any $l \in [2, 10]$, we examine the average time required by each technique to generate $l$-diverse versions of the microdata in SAL-$d$ (OCC-$d$). Figure~\ref{fig:time-vs-l} illustrates the computation time as a function of $l$. The overhead of {\em Hilbert} decreases with the increase of $l$, which is also observed in \cite{gkkm07}. In contrast, the computation cost of {\em TP} and {\em TP$^+$} increases with $l$. To understand this, recall that {\em TP} works by first dividing the tuples into QI-groups, and then iteratively moving tuples from each QI-group to the residue set $R$, until all QI-groups and $R$ become $l$-eligible. Given a larger $l$, {\em TP} has to remove more tuples from each QI-group to achieve $l$-eligibility, resulting in higher computation cost. In turn, this indicates that the residue set $R$ becomes larger, when $l$ increases. Consequently, the running time of {\em TP$^+$} also increases with $l$, because {\em TP$^+$} post-processes the output of {\em TP} by invoking {\em Hilbert} on $R$, the cost of which increases with the size of $R$.

\begin{figure}[t]
\centering
\begin{tabular}{cc}
\hspace{-5mm}\includegraphics[height=32mm]{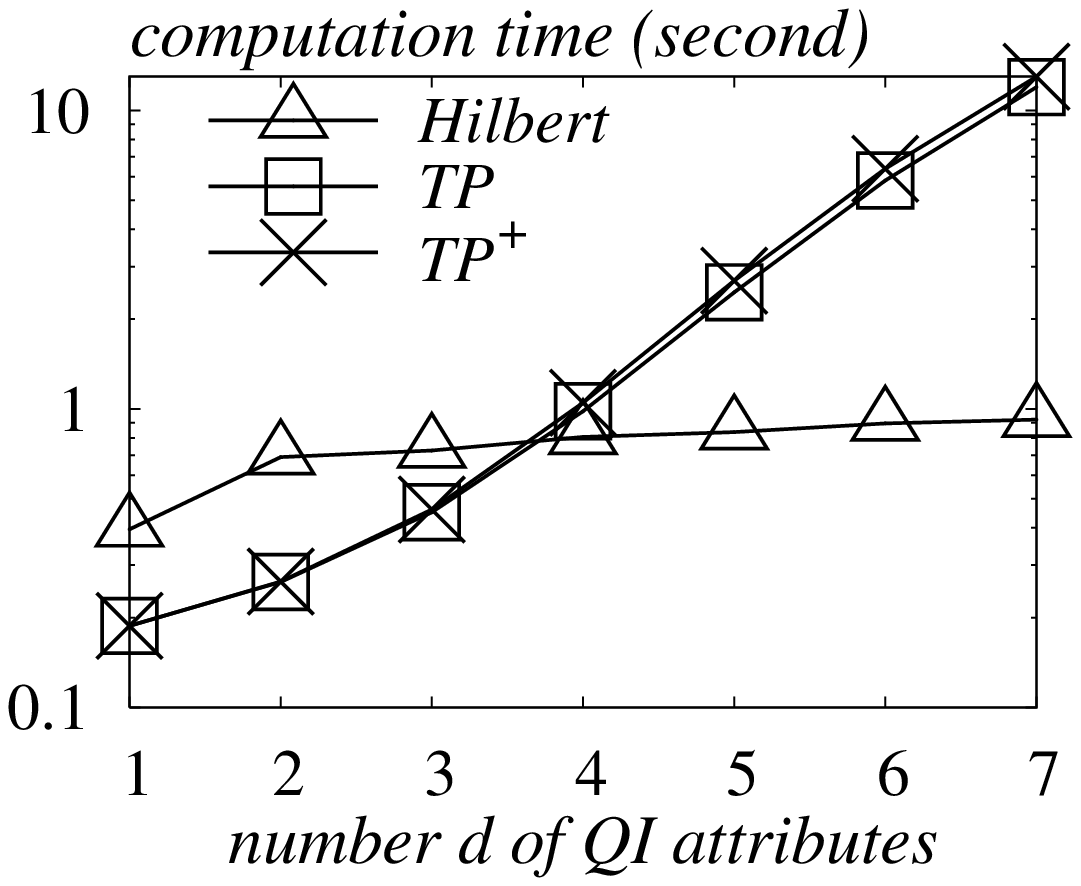} &
\hspace{-5mm}\includegraphics[height=32mm]{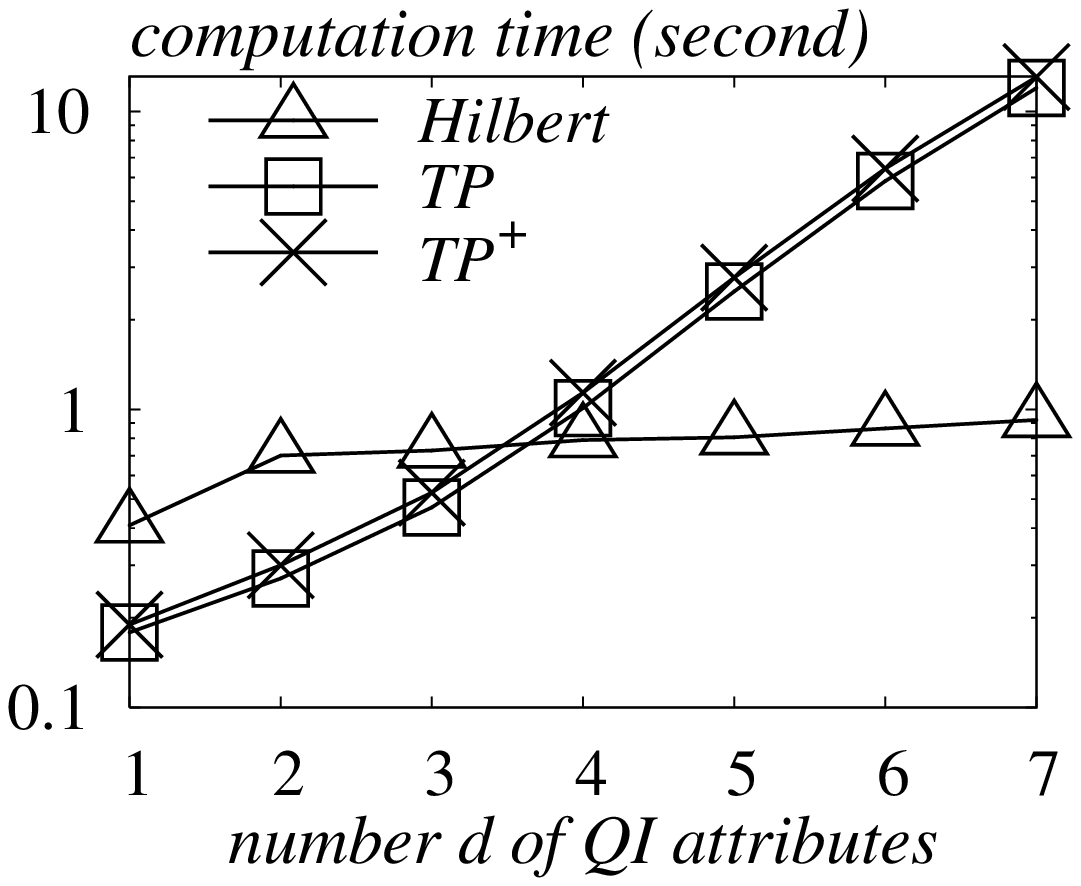} \\
\hspace{-0mm} (a) SAL-$d$ & \hspace{-5mm}(b) OCC-$d$
\end{tabular}
\vspace{-4mm} \caption{Computation time vs.\ $d$ ($l = 4$)}
\vspace{-0mm} \label{fig:time-vs-d}
\end{figure}

Next, we fix $l=6$, and investigate the average computation time of each algorithm on the microdata in AGE-$d$ (OCC-$d$), varying $d$ from $1$ to $7$. Figure~\ref{fig:time-vs-d} illustrates the results. The computation cost of {\em TP} increases with $d$. This is because, when $d$ is large, {\em TP} has to employ more generalization on the microdata to achieve $l$-diversity (see Figure~\ref{fig:exp-stars-vs-d}). As a result, {\em TP} needs to move a larger number of tuples from the QI-groups to the set $R$, leading to higher processing overhead. Because {\em TP$^+$} incorporates {\em TP}, its computation time also increases with $d$. The efficiency of {\em Hilbert} is insensitive to $d$, which is consistent with the experimental results in \cite{gkkm07}.

\begin{figure}[t]
\centering
\begin{tabular}{cc}
\hspace{-5mm}\includegraphics[height=32mm]{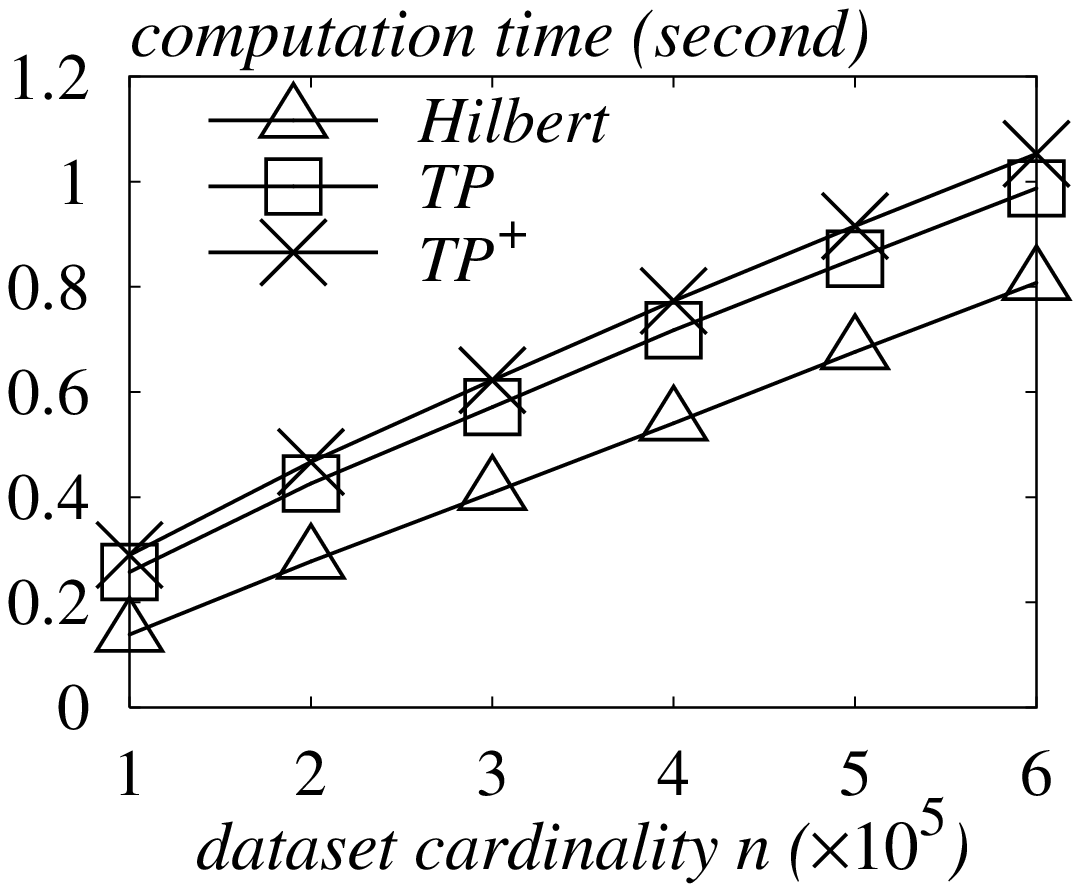} &
\hspace{-5mm}\includegraphics[height=32mm]{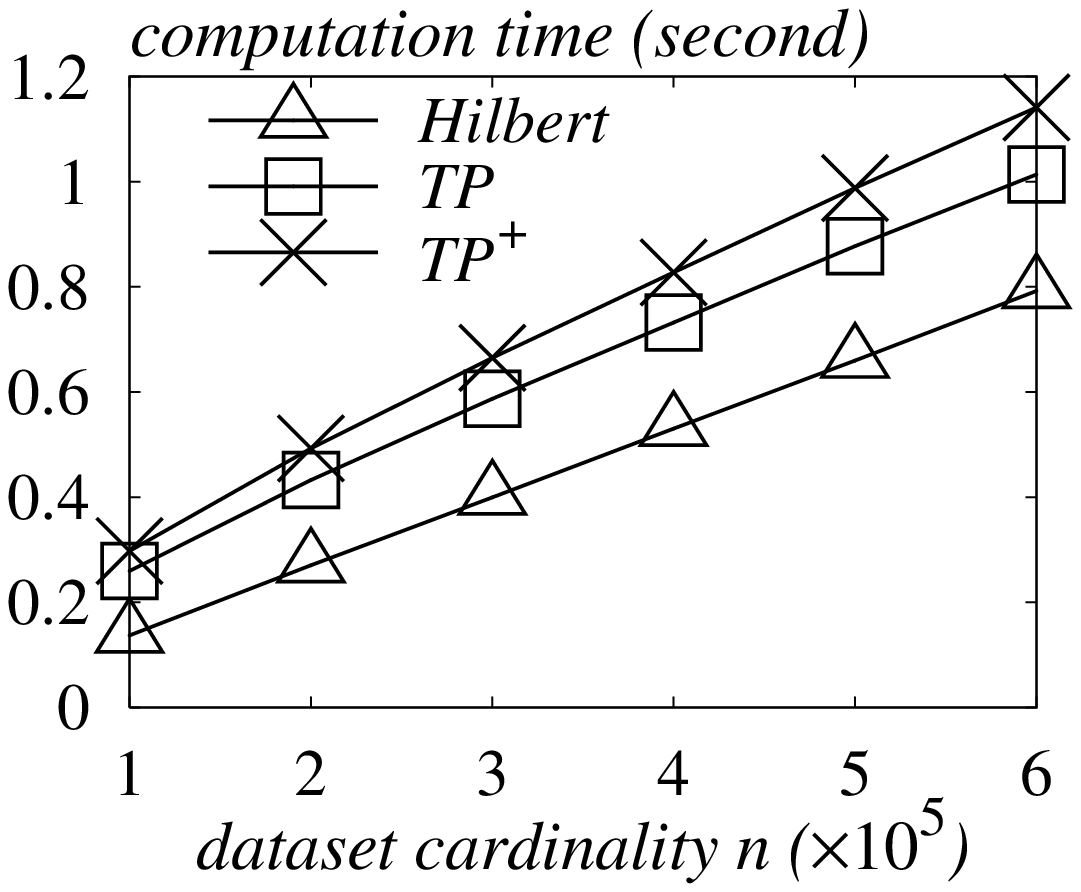} \\
\hspace{-0mm} (a) SAL-$4$ & \hspace{-0mm}(b) OCC-$4$
\end{tabular}
\vspace{-4mm} \caption{Computation time vs.\ $n$ ($l=6$)}
\vspace{-4mm} \label{fig:time-vs-n}
\end{figure}

Finally, we study the effect of dataset cardinality $n$ on the computation time of generalization. For each table $\mathcal{T}$ in SAL-$4$ and OCC-$4$, we generate various sample sets of $\mathcal{T}$, with sample size varying from $100$k to $600$k. After that, we employ each algorithm to compute a $6$-diverse generalization of each sample set, and measure the average running time of the algorithm. Figure~\ref{fig:time-vs-n} plots the computation overhead as a function of the dataset cardinality $n$. The running time of each technique is less than $1.2$ seconds even for the largest datasets. The processing cost of {\em TP} increases linearly with $n$. This is expected, since (i) {\em TP} bypasses phase three in all cases, and (ii) the first and second phases of {\em TP} have linear time complexity. The computation time of {\em Hilbert} is almost linear, which confirms the analysis in \cite{gkkm07} that {\em Hilbert} runs in $O(n \log n)$ time. Since both {\em TP} and {\em Hilbert} scale well with $n$, {\em TP$^+$} (as a combination of {\em TP} and {\em Hilbert}) also achieves satisfactory scalability.



\extraspacing \noindent {\bf Summary } In terms of data utility, {\em TP$^+$} significantly outperforms not only {\em TP} but also {\em Hilbert}, the best existing algorithm that can achieve $l$-diversity via suppression. In terms of computation time, {\em Hilbert} is superior than {\em TP} and {\em TP$^+$}. Nevertheless, as the anonymization of microdata incurs only one-time cost,  computational efficiency is not a major concern in data publishing. This makes {\em TP$^+$} more preferable than {\em Hilbert} for suppression-based anonymization.


\subsection{Comparison with Single-Dimensional Generalization} \label{sec:exp-kl}

Having established {\em TP$^+$} as an excellent suppression-based algorithm, in this section we will move on to compare {\em TP$^+$} with the single- and multi-dimensional generalization methods. First, we observe that multi-dimensional generalization always guarantees higher data utility than suppression. Specifically, given any table $\mathcal{T}^*$ generated by suppression, we may transform it into a multi-dimensional generalization $\mathcal{T}^{*\prime}$, by replacing each star on a QI attribute $A$ with a sub-domain of $A$, such that the sub-domain contains all $A$ values appearing in the QI-group. As each sub-domain captures more accurate information than a star, $\mathcal{T}^{*\prime}$ always incurs less information loss than $\mathcal{T}^*$. For example, let us consider Table~\ref{tbl:intro-micro-2-div}, which contains four stars on {\em Age} and {\em Education}, respectively, and all the stars appear in the first QI-group. We may replace each star on {\em Age} with a sub-domain ``$<$50'', as it covers the {\em Age} values of all tuples in the QI-group (see Table~\ref{tbl:intro-micro}). Similarly, each star on {\em Education} can be replaced with a sub-domain ``Bachelor or above''. This results in the multi-dimensional generalization in Table~\ref{tbl:related-multi-dimen}, which apparently contains more accurate information than Table~\ref{tbl:intro-micro-2-anony}.

As discussed in Section~\ref{sec:related}, however, multi-dimensional generalization produces anonymized data that is unusable by off-the-shelf statistical package, whereas suppression does not suffer from this drawback. Consequently, even though multi-dimensional generalization outperforms suppression in terms of data utility, it cannot be chosen over suppression in the scenarios where software support for anonymized data is a concern. Yet, in such scenarios, suppression is not the only applicable anonymization method, as single-dimensional generalization can also generate data that can be directly fed into commercial statistical software. This leads to an interesting question: how does {\em TP$^+$} compare to the existing single-dimensional generalization methods in terms of data utility?

To answer the above question, we implement {\em TDS}\footnote{{\em TDS} was initially designed for $k$-anonymity. We modify it into an $l$-diversity algorithm to facilitate the comparison with {\em TP$^+$}.}, the state-of-the-art single-dimensional generalization algorithm proposed in \cite{fwy05}, and compare it against {\em TP$^+$} on the quality of generalization. Following \cite{kg06,gkkm07}, we measure the quality of a generalized table $\mathcal{T}^*$, by the similarity between the multi-dimensional distribution induced by $\mathcal{T}^*$ and the distribution induced by the microdata $\mathcal{T}$. To explain this, observe that each tuple in $\mathcal{T}$ can be regarded as a point in a $(d+1)$-dimensional space $\Omega$, where the $i$-th ($1 \le i \le d$) dimension corresponds to the $i$-th QI attribute in $\mathcal{T}$, and the $(d+1)$-th dimensional corresponds to the sensitive attribute. As such, $\mathcal{T}$ can be captured by a probabilistic density function (pdf) $f$ defined on $\Omega$, such that, for any point $p \in \Omega$, $f(p)$ equals the fraction of tuples in $\mathcal{T}$ represented by $p$.

Similarly, any generalization $\mathcal{T}^*$ of $\mathcal{T}$ defines a pdf $f^*$ on $\Omega$. In particular, if a tuple $t^* \in \mathcal{T}^*$ has a star on an attribute $A$, we treat $t^*[A]$ as a random variable uniformly distributed in the domain of $A$; on the other hand, if $t^*[A]$ is a sub-domain of $A$, we treat $t^*[A]$ as uniformly distributed in the sub-domain. As in \cite{kg06,gkkm07}, we gauge the similarity between $f$ and $f^*$ by their {\em KL-divergence} \cite{kl51}, defined as
\begin{equation} \label{eq:exp-kl}
KL(f, f^*) = \sum_{p \in \Omega} f(p) \cdot \ln\frac{f(p)}{f^*(p)}.
\end{equation}
A smaller $KL(f, f^*)$ indicates a higher degree of similarity between $f$ and $f^*$.

\begin{figure}[t]
\centering
\begin{tabular}{cc}
\hspace{-5mm}\includegraphics[height=32mm]{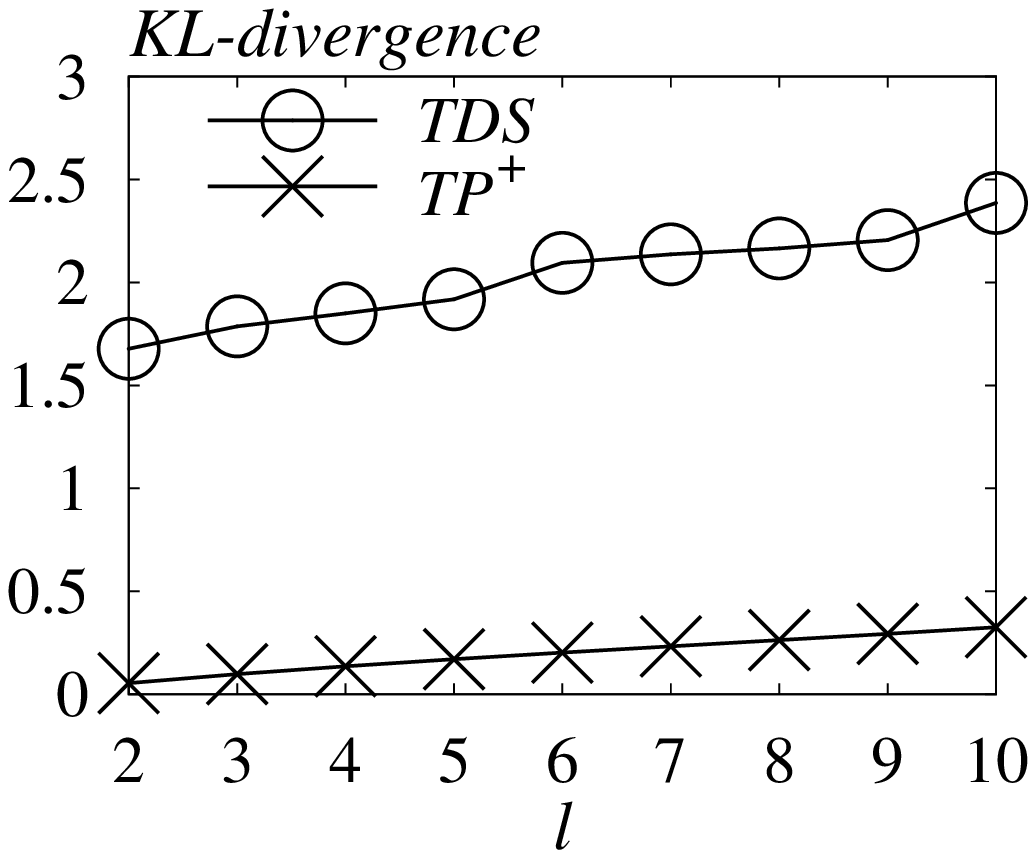} &
\hspace{-5mm}\includegraphics[height=32mm]{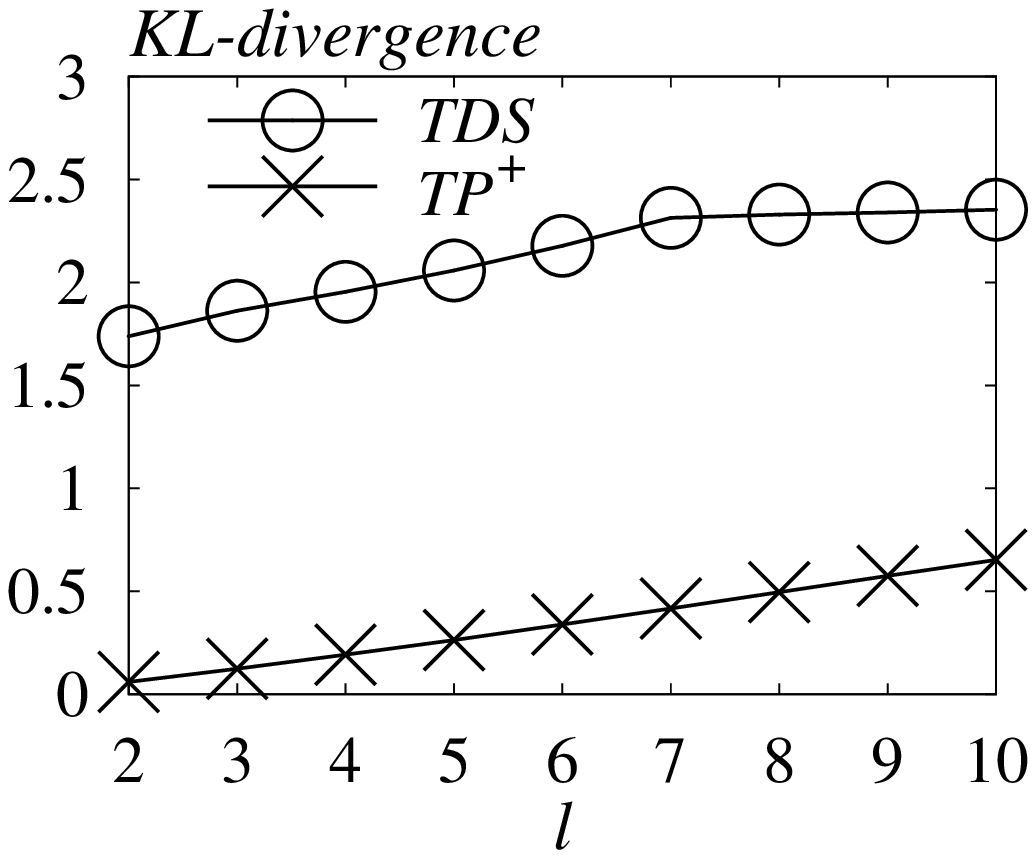} \\
\hspace{-0mm} (a) SAL-$4$ & \hspace{-5mm}(b) OCC-$4$
\end{tabular}
\vspace{-4mm} \caption{KL-divergence vs.\ $l$}
\vspace{-0mm} \label{fig:exp-kl-vs-l}
\end{figure}

In the first set of experiments, we apply {\em TP$^+$} and {\em TDS} on the microdata in SAL-$4$ and OCC-$4$, varying $l$ from $2$ to $10$. Figure~\ref{fig:exp-kl-vs-l} plots the average KL-divergence incurred by each algorithm as a function of $l$. {\em TP$^+$} significantly outperforms {\em TDS} in all cases. The KL-divergence entailed by {\em TP$^+$} increases with $l$, which is consistent with the results in Figure~\ref{fig:exp-stars-vs-l} that, a larger $l$ leads to more stars in the generalized table.

\begin{figure}[t]
\centering
\begin{tabular}{cc}
\hspace{-5mm}\includegraphics[height=32mm]{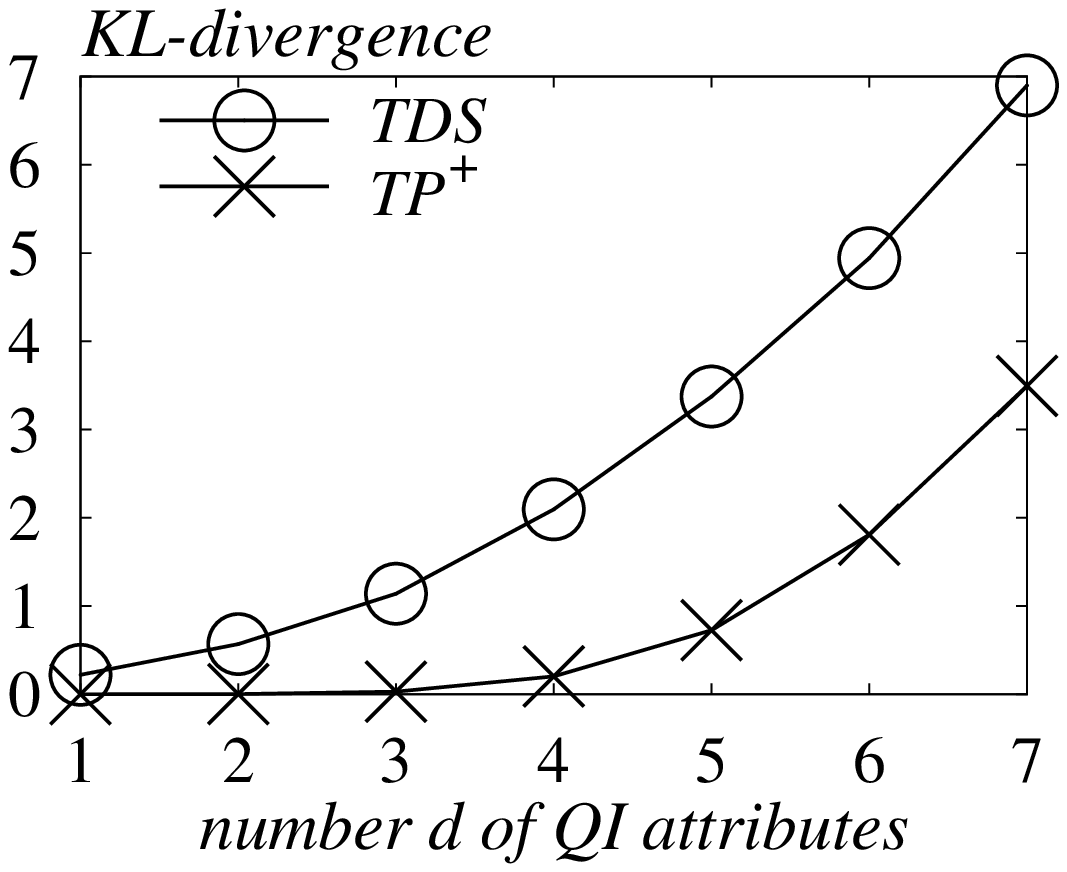} &
\hspace{-5mm}\includegraphics[height=32mm]{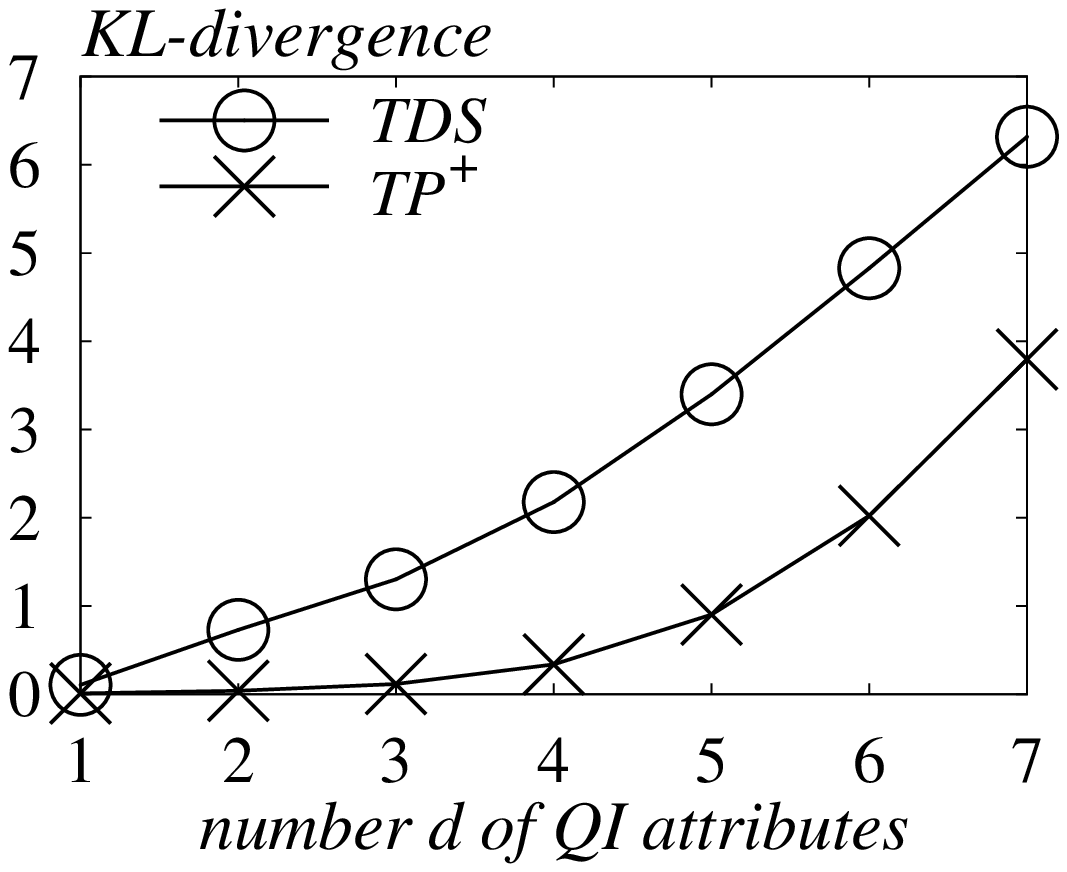} \\
\hspace{-0mm} (a) SAL-$d$ & \hspace{-5mm}(b) OCC-$d$
\end{tabular}
\vspace{-4mm} \caption{KL-divergence vs.\ $d$ ($l = 6$)}
\vspace{-4mm} \label{fig:exp-kl-vs-d}
\end{figure}

Next, we fix $l=6$, and measure the average KL-divergence incurred by {\em TP$^+$} and {\em TDS} in anonymizing the microdata in SAL-$d$ (OCC-$d$). Figure~\ref{fig:exp-kl-vs-d} illustrates the average KL-divergence as a function of $d$. Again, the information loss caused by {\em TP$^+$} is consistently smaller than {\em TDS}. The performance of both algorithms degrades with the increase of $d$, since, as mentioned in Section~\ref{sec:exp-star}, all generalization methods inevitably suffer from the curse of dimensionality.

In summary, {\em TP$^+$} achieves significantly higher data utility than {\em TDS}. This makes {\em TP$^+$} a favorable choice for data publishers who aim to release generalized tables that can be easily analyzed using existing statistical software. Multi-dimensional generalization methods, on the other hand, should be adopted when the users are equipped with their own tools for analyzing complex anonymized data.

\section{Conclusions} \label{sec:conclude}

The existing work on $l$-diversity focuses on the development of heuristic solutions. In this paper, we present the first theoretical study on the complexity and approximation algorithms of $l$-diversity. First, we prove that computing the optimal $l$-diverse generalization is NP-hard, for any $l \ge 3$. After that, we develop an $O(l \cdot d)$-approximation algorithm for the problem, where $d$ denotes the number of QI attributes in the microdata. The effectiveness and efficiency of the proposed technique are verified through extensive experiments.

There exist several promising directions for future work. First, we plan to improve our three phase algorithm, to achieve a better approximation ratio for the star minimization problem. Second, we have only considered categorical domains in this paper, in the future we will try to extend our algorithm to support numerical domains. Finally, it is interesting to investigate the hardness and approximation algorithms for other privacy principles.

%

\begin{small}
\bibliographystyle{abbrv}
\bibliography{ref}
\end{small}

\pagebreak

\section{Appendix} \label{sec:appendix}

\noindent {\bf Proof of Lemma~\ref{lem:phase-one} }
Consider the following two cases.  If initially $h(Q_i, v) \le
h(\Qo_i)$, then the algorithm must have not removed any tuple of SA
value $v$; hence $h(Q'_i,v) \le h(\Qo_i, v)$ trivially.  If $h(Q_i,
v) > h(\Qo_i)$ initially, then the algorithm will reduce the number
of tuples in $Q_i$ to exactly $h(\Qo_i)$, i.e., $h(\Qo_i, v) =
h(\Qo_i)$.  We argue that $h(Q'_i) \le h(\Qo_i)$, which implies
$h(Q'_i, v) \le h(Q'_i) \le h(\Qo_i) = h(\Qo_i, v)$.  Assume by
contradiction that $h(Q'_i) > h(\Qo_i)$.  Consider the set $Q''_i$
such that $h(Q''_i, v) = \min\{h(Q'_i), h(Q_i,v)\}$ for all $v$,
i.e., $Q''_i$ is obtained from $Q_i$ by reducing the number of
tuples of each SA value $v$ to $h(Q'_i)$, whenever
$h(Q_i, v) > h(Q'_i)$. Note that we must have $Q'_i \subseteq
Q''_i$. Since $Q'_i$ is $l$-eligible and $h(Q'_i) = h(Q''_i)$,
$Q''_i$ must also be $l$-eligible. Thus, the algorithm would have
stopped earlier at $Q''_i$, whose pillar height is higher than that
of $\Qo_i$, reaching a contradiction. \done

\extraspacing \noindent
{\bf Proof of Corollary~\ref{cor:phaseone} }
  Let $Q'_1, \dots, Q'_s, R'$ be an optimal solution.  Since $Q'_i$ is
  $l$-eligible, by Lemma~\ref{lem:phase-one}, we have $h(Q'_i, v) \le
  h(\Qo_i, v)$.  Summing over all $v$, we have $|Q'_i| \le |\Qo_i|$, and
  thus $|R'| = n - \sum_{i=1}^s |Q'_i| \ge n - \sum_{i=1}^s |\Qo_i| =
  |\Ro|$.  If the algorithm stops after phase one, $\Ro$ is also
  $l$-eligible, and $\Qo_1, \dots, \Qo_s, \Ro$ is a valid solution;
  hence $|\Ro| = |R'|$. \done

\extraspacing \noindent
{\bf Proof of Corollary~\ref{cor:optlower} }
  Let $Q'_1, \dots, Q'_s, R'$ be an optimal solution. By
  Lemma~\ref{lem:phase-one}, $h(Q'_i, v) \le h(\Qo_i, v)$ for any $v$.
  Summing over all $i$, we have $\sum_{i=1}^s h(Q'_i, v) \le \sum_{i=1}^s
  h(\Qo_i, v)$.  Since $h(R',v) + \sum_{i=1}^s h(Q'_i, v) = h(\Ro,v) +
  \sum_{i=1}^s h(\Qo_i, v)$, we have $h(R',v) \ge h(\Ro,v)$, in particular,
  $h(R') \ge h(\Ro)$. Since $R'$ is $l$-eligible, $OPT = |R'| \ge l \cdot
  h(R') \ge l \cdot h(\Ro)$. \done

\extraspacing \noindent
{\bf Proof of Lemma~\ref{lmm:rt=ro} } The lemma is equivalent to the claim that phase two
never picks a pillar of $R$ to move tuples to. Indeed, if a pillar
$p$ of $R$ is picked in some iteration, then there must be an alive
QI-group $Q$ such that $h(Q,p) > 0$. Since $Q$ is $l$-eligible, it
contains at least $l$ different SA values, i.e., $h(Q,v) > 0$ for
each of those values $v$. As $Q$ is alive, by definition, all the SA
values in $Q$ are alive. On the other hand, $R$ has at most $l-1$
pillars; otherwise, $R$ is $l$-diverse, and the current iteration
should not have started. Hence, there should be at least one alive
SA value that is not a pillar in $R$. So the algorithm should have
picked that value instead of $p$ (recall that each iteration selects
the least frequent alive SA value in $R$). \done

\extraspacing \noindent
{\bf Proof of Lemma~\ref{lem:phasetwo} } Since the algorithm did not stop after phase one, $\Ro$ is not
$l$-eligible, implying $|\Ro| < l \cdot h(\Ro)$.  As $h(\Ro) =
h(\Rt)$ (Lemma~\ref{lmm:rt=ro}), the algorithm will stop as soon as
$|R|$ reaches $l \cdot h(\Ro)$.  In each iteration
of phase two, we remove at most $l$ tuples together, since a thin
$l$-eligible QI-group has at most $l$ pillars.  Therefore, $|R|$ at
most exceeds $l\cdot h(\Ro)$ by $l-1$ when the algorithm terminates.
\done

\extraspacing \noindent
{\bf Proof of Lemma~\ref{lem:Rt} }
  Assume for contradiction that $p$ is a conflicting pillar in all $Q_i$.
  Since $R$ is not $l$-eligible, we have
\begin{equation}
\label{eq:4}
  |R| < l \cdot h(R, p).
\end{equation}
For any $i$, since $Q_i$ is thin and has $p$ as one of its
conflicting pillars, we have
\begin{equation}
  \label{eq:5}
  |Q_i| = l \cdot h(Q_i, p).
\end{equation}
Summing (\ref{eq:5}) over all $i$ and (\ref{eq:4}), we have
\[ n < l \cdot h(\T, p), \]
where $h(\T, p)$ represents the total number of tuples with SA value
$p$ in the microdata $\T$.  This contradicts with the assumption
that $\T$ is $l$-eligible. \done

\extraspacing \noindent
{\bf Proof of Corollary~\ref{cor:Rt} }
  If $\Rt$ has only one pillar, then all $\Qt_i$ can only conflict with
  $\Rt$ on this pillar, contradicting Lemma~\ref{lem:Rt}. \done

\extraspacing \noindent
{\bf Proof of Theorem~\ref{thm:l2} } 
If the algorithm terminates in phase one, then the theorem follows from
Corollary~\ref{cor:phaseone}.  Otherwise, it must terminate during phase
two, due to Corollary~\ref{cor:Rt} and the fact that if $R$ has at least two
pillars, then it must be $2$-eligible.  Then the theorem follows from
Corollary~\ref{cor:twophase-lower}. \done

\extraspacing \noindent
{\bf Proof of Lemma~\ref{lem:round1} }
  Let $Q_1, \dots, Q_s, R$ be the status at the beginning of a particular
  round.  We know that all the $Q_i$'s are dead, and $|R| < l \cdot h(R)$.
  Thus Lemma~\ref{lem:Rt} still holds on $Q_1, \dots, Q_s, R$, i.e., for
  any pillar $p$ of $R$, there exists a QI-group in which $p$ is not a
  conflicting pillar.  In other words, $p$ is covered by at least one
  $\overline{C(Q)}$.  Thus, the greedy algorithm will pick at most $l-1$
  QI-groups before it finishes ($R$ has at most $l-1$ pillars).  Afterward,
  each pillar of these QI-groups will ship a tuple to $R$.  We distinguish
  between two cases.  For a pillar $p$ of $R$, $h(R,p)$ increases by at
  most $l-2$ since there is at least one QI-group in which $p$ is not a
  pillar.  For other SA values $v$ of $R$, $h(R,v)$ increases by at most
  $l-1$.  But since $h(R,v) \le h(R) -1$, these other SA values will not
  cause $h(R)$ to increase by more than $l-2$, either. \done

\extraspacing \noindent
{\bf Proof of Lemma~\ref{lem:round2} }
  Define the {\em gap} for $R$ to reach $l$-eligibility as $\Delta(R)= l
  \cdot h(R) - |R|$.  The algorithm will terminate as soon as the gap
  reduces to zero or negative.  At the beginning of phase three, we have
\begin{equation}
\label{eq:2}
 \Delta(\Rt) = l \cdot h(\Rt) - |\Rt| \le  l \cdot h(\Rt).
\end{equation} 

Next we consider how much the gap reduces in each round.  Suppose in
the first step of a round, the greedy algorithm picks $r$ QI-groups.
Following the same reasoning as in the proof of
Lemma~\ref{lem:round1}, $h(R)$ increases by at most $r-1$.  On the
other hand, for any QI-group $Q$ picked by the greedy algorithm, its
pillar height $h(Q)$ decreases by one.  In the second step of this
round, we remove tuples from $Q$ until it becomes thin again,
meaning that a total of $l$ tuples (including those removed in the
first step) must have been removed in this round.  Henceforth, $|R|$
has increased by at least $l \cdot r$ tuples in this round.  So the
net effect is that $\Delta(R)$ must have decreased by at least $l
\cdot r - l(r-1) = l$ tuples.

Combining with (\ref{eq:2}), we conclude that the total number of
rounds is at most $\Delta(\Rt) / l \le h(\Rt)$. \done

\extraspacing \noindent
{\bf Proof of Theorem~\ref{thm:main} }
Let $\hat{R}$ be the final set of removed tuples at the end of phase
three.  By Lemmas~\ref{lem:round1} and ~\ref{lem:round2}, we have
\[ h(\hat{R}) \le h(\Rt) + (l-2) h(\Rt) = (l-1)h(\Rt). \]
Since in the second step of each round of phase three, we remove at
most $l$ tuples together and the algorithm terminates as soon as
$|R|$ reaches $l\cdot h(R)$, we have
\[ |\hat{R}| \le l \cdot h(\hat{R}) + l-1. \]
Also note that $h(\Ro) = h(\Rt)$, we have
\[ |\hat{R}| \le l(l-1)h(\Ro) + l - 1. \]
By Corollary \ref{cor:optlower}, we can bound the approximation
ratio as
\[ \frac{|\hat{R}|}{OPT} \le \frac{l(l-1)h(\Ro) + l -1}{l\cdot h(\Ro)} <
l; \] hence the proof. \done

\end{sloppy}
\end{document}